\newtheorem{theorem}{Theorem}
\newtheorem{corollary}{Corollary}
\newcommand{\mP}{\mathbb{P}}
\newcommand{\E}{\mathbb{E}}
\newcommand{\gig}{\mathrm{GIG}}
\newcommand{\ig}{\mathrm{IN}}
\begin{document}

\title{\vspace{-1cm} Simulation-based Regularized Logistic Regression}

\author{Robert B. Gramacy\footnote{Part of this work was done while
    RBG was at the Statistical Laboratory, University of Cambridge} 
 \;\& Nicholas G.~Polson\\Booth School of Business\\
  University of Chicago, USA}

\date{First Draft: November 2009\\
This Draft: January 2012 \vspace{-0.35cm}}

\maketitle

%\doublespacing

\begin{abstract}
  In this paper, we develop a simulation-based framework for
  regularized logistic regression, exploiting two novel results for
  scale mixtures of normals.  By carefully choosing a hierarchical
  model for the likelihood by one type of mixture, and implementing
  regularization with another, we obtain new MCMC schemes with varying
  efficiency depending on the data type (binary v.~binomial, say) and
  the desired estimator (maximum likelihood, maximum {\em a
    posteriori}, posterior mean).  Advantages of our omnibus approach
  include flexibility, computational efficiency, applicability in $p
  \gg n$ settings, uncertainty estimates, variable selection, and
  assessing the optimal degree of regularization.  We compare our
  methodology to modern alternatives on both synthetic and real data.
  An {\sf R} package called {\tt reglogit} is available on CRAN.

  \bigskip 
  \noindent {\bf Key words: } Logistic Regression, Regularization,
  $z$--distributions, Data Augmentation, Classification, Gibbs
  Sampling, Lasso, Variance-Mean mixtures, Bayesian Shrinkage.
\end{abstract} 

\section{Introduction}
\label{sec:intro}

Large scale logistic regression has numerous modern day applications
from text classification to genetics.  We develop a flexible framework
for maximum likelihood, maximum {\em a posteriori}, and full Bayesian
posterior inference for regularized models. Our motivations stem from
a desire to find common ground between point estimation in
``large-$p$'' settings
\citep{krish:etal:2005,genkin:lewis:madigan:2007}, where $p$ is the
number of predictors, and full Bayesian inference for ``small-$p$''
\citep{holmes:held:2006,fruh2:2007,frueh2:etal:2009,fahrmeir:etal:2010,frueh2:2010}.
% \cite{holmes:held:2006} [HH hereafter] and \cite{frueh2:2010} provide
% Bayesian approaches to logistic via highly efficient MCMC/Gibbs
% sampling routines, and \cite{scott:2009} discusses connections between
% this setup and the classical approach. 
Collecting such distinct methods into a unifying framework facilitates
a number of novel enhancements including posterior inference for the
amount of regularization, and an efficient handling of binomial data.

We start by framing a typical regularized optimization criteria as a
powered-up posterior, or {\em power-posterior}
\citep{friel:pettitt:2008}, with a shrinkage prior such as the lasso
\citep{tibsh:1996}.  We then show how inference may proceed by
employing two (heretofore unrelated) data augmentation schemes: one
for the powered-up logistic likelihood; and the other for the
prior. The combined effect is a fully Gibbs MCMC sampler which, among
other advantages, allows estimators previously requiring custom
algorithms to be calculated via a single simulated annealing
\citep{kirkp:etal:1984} scheme.

Specifically, consider a set of binary responses, $y_i$, encoded as
$\pm 1$, regressed on $p$-dimensional predictors $x_i$ via the
logistic model $\mP(y_i = \pm 1| x_i, \beta) = (1 + e^{-y_i x_i^\top
  \beta})^{-1}$, for $i=1,\dots, n$.  When $p$ is large it is
paramount to infer $\beta$ under regularization or penalization.  A
common formulation
\citep[e.g.,][]{genkin:lewis:madigan:2007,park:hastie:2008} involves
finding regularized point-estimates $\hat{\beta}$ under an
$L_\alpha$-norm penalty, where parameters $\sigma^2 = (\sigma_1^2,
\dots, \sigma_p^2)$ control the relative penalization applied to each
predictor
\begin{equation}
\hat{\beta} = {\rm argmin}_\beta \; \; 
\sum_{i=1}^n \ln \left ( 1 + e^{ - y_i x_i^\top \beta } \right ) +
\nu^{-\alpha} \sum_{j=1}^p \left| \frac{\beta_j}{\sigma_j}
\right|^\alpha,
\;\;\;\;\; \alpha > 0.
\label{eq:pp}
\end{equation}
The parameter $\nu$ dictates the amount of regularization, the
relative pull ($\nu^{-1}$) or shrinkage of the $\beta_j$'s towards
zero.  Depending on the choice of $\alpha$, a number of algorithms
have been proposed to solve for $\hat{\beta}$.  For example,
\cite{madigan:ridgeway:2004} discuss how the LARS algorithm can be
useful as a subroutine for the popular case of $\alpha =
1$. % The scale parameters $\sigma_1, \dots, \sigma_p$ allow
% the practitioner to control the relative rates at which the $\beta_j$
% are shrunk to zero.
It is typical to work with $x_i$ pre-scaled to have unit $L_2$-norm
with $\sigma_1 = \cdots \sigma_p = 1$ so that inference for $\beta$ is
equivariant under a re-scaling of the covariates. We follow this
convention in application but develop much of the discussion in the
general case for completeness.  % Section \ref{sec:hierlaplace}
% discusses settings of these parameters that are popular in the
% recent Bayesian logistic regression literature.
The special setting $\sigma_j^2 = \infty$ indicates no shrinkage for
$\beta_j$.  At least $\max\{0, p-n\}$ of the $\sigma_j^2$'s must be
finite to obtain stable estimators.  If there is an intercept in the
model, denoted by $\beta_0$, then it is common practice to absolve it
of penalty by taking $\sigma_0^2 = \infty$.  Throughout we begin the
$j$--indexing at $j=1$, ignoring the $0^{\mathrm{th}}$ term for
simplicity.

Our approach offers a fully probabilistic alternative by viewing the
objective function (\ref{eq:pp}) as a (log) posterior distribution
whose maximum {\em a posteriori} (MAP) estimator coincides with
$\hat{\beta}$. A multiplicity parameter $\kappa$ can then be
introduced to help find the MAP via simulation.  Our key insight,
which makes the simulation efficient, is that the logistic likelihood
component of the posterior can be written hierarchically using
$z$--distributions \citep{bn:kent:sorensen:1982}, leading to a data
augmentation scheme that generalizes that of \cite{holmes:held:2006}
[HH hereafter].  Combining this with a standard data augmentation for
the prior yields a highly blocked Gibbs MCMC algorithm for logistic
regression.  $Z$-distributions also suggest a new representation of
the likelihood that is equivalent (to HH) but requires $n$ fewer
latent variables.  Finally, we recognize that $\kappa$ has a secondary
use for binomial data (multiple $y$ observed for each $x$) which
otherwise would require more latent variables.

A distinctive feature of our framework is how it deals with the amount
of regularization, $\nu$, which is traditionally chosen by cross
validation (CV).  As an alternative, we may extend the hierarchical
model to include a prior for $\nu$ so that the marginal likelihood can
be computed and used to set $\nu = \hat{\nu}$, or to integrate $\nu$
out.  Posterior expectations, thus obtained, can give superior
point--estimators for $\beta$ in large-$p$ {\em linear} regression
contexts \citep{hans:2009}, and we show how this extends to logistic
regression.

The rest of the paper is outlined as follows. Section
\ref{sec:reglogit} provides our data augmentation strategies for
sparse high dimensional logistic regression, and Section
\ref{sec:inference} develops an MCMC scheme for estimation.  Section
\ref{sec:app} illustrates our approach with empirical comparisons to
modern competitors.  Finally, Section \ref{sec:discuss} concludes with
simple extensions and directions for future research.  An supporting
{\sf R} package, {\tt reglogit}, is available on CRAN.

\section{Regularized logistic regression via power-posteriors}
\label{sec:reglogit}

The central problem is to find the MLE, MAP, or posterior mean
estimator in logistic regression.  To do this, consider the following
power-posterior distribution inspired by Eq.~(\ref{eq:pp}):
\begin{equation}
\pi_{\kappa,\alpha} ( \beta | y, \nu, \sigma^2 ) = 
C_{\kappa, \alpha}(\nu) \exp \left \{ - \kappa \left ( 
\sum_{i=1}^n \ln \left ( 1 + e^{ - y_i x_i^\top \beta } \right ) +
\nu^{-\alpha} \sum_{j=1}^p \left| \frac{\beta_j}{\sigma_j}
\right|^\alpha \right ) \right \}.
\label{eq:pp2}
\end{equation}
The placement of $\kappa$ and $\alpha$ as subscripts in $\pi_{\kappa,
  \alpha}$ and $C_{\kappa, \alpha}(\nu)$, a normalization factor,
signals that these are user specified, not parameters to be estimated.
The $\alpha$ setting indicates the type of $L_\alpha$ regularization,
e.g., $L_1$ for absolute, and $L_2$ for quadratic.  The {\em
  multiplicity} (or thermodynamic) parameter $\kappa$, is a tool
borrowed from the power-posterior and simulated annealing literature
\citep[see,
e.g.,][]{pincus:1968,kirkp:etal:1984,doucet:godsill:robert:2002,jacquier:johannes:polson:2007,friel:pettitt:2008},
that facilitates several types of simulation based inference, as we
shall describe.

Power-posterior analysis can be helpful for calculating modes and
posterior means from complex optimization criteria, and marginal
likelihoods for Bayesian estimators.  Larger values of $\kappa$ cause
the density to concentrate near the modes, whereas small $\kappa$
distributes it away from the modes, in the troughs.  This motivates
two types of estimator.  First, $\mathbb{E}_{\kappa,\alpha}\{ \beta |
y, \sigma^2, \nu\}$ can be estimated for choices of $\nu$ by allowing
$\kappa$ to vary as in simulated annealing.  When $\nu=0$ the
estimator converges to the MLE as $\kappa \rightarrow \infty$.  When
$\nu > 0$, it converges to a posterior mode, or equivalently the
regularized estimator, $\hat{\beta}$ solving Eq.~(\ref{eq:pp}).
Furthermore, setting $\kappa = 1$ yields the posterior mean estimator.
Second, we recognize that $\kappa$ can be used to obtain an efficient
computational framework for binomial regression, where multiple binary
responses are recorded for each predictor.  % This is why we call
% $\kappa$ the multiplicity parameter.
In what immediately follows, we regard $\kappa$ as fixed---a further
discussion is deferred to Section \ref{sec:inference}.

Observe that the likelihood--prior combination below yields
Eq.~(\ref{eq:pp2}) via Bayes' rule.
\begin{align}
L_\kappa( y | \beta )  &= e^{ - \kappa \sum_{i=1}^n \ln \left ( 1 +
      e^{ - y_i x_i^\top \beta } \right ) } = \prod_{i=1}^n \left (
    1 + e^{ - y_i x_i^\top \beta } \right )^{- \kappa } \label{eq:psl}
  \\
p_{\kappa, \alpha}( \beta | \nu , \sigma^2 )  &\propto \exp \left ( -
    \kappa \nu^{-\alpha} \sum_{j=1}^p | \beta_j/\sigma_j
    |^{\alpha} \right ) = \prod_{j=1}^p 
  \exp\left\{-\kappa \left|\frac{\beta_j}{\nu\sigma_j}
    \right|^\alpha\right\}. \nonumber
\end{align}
% There are many simplifications when $\alpha \in \{1,2\}$. 
The following subsections provide data augmentation schemes for this
likelihood and prior.  They primarily concentrate on the $\alpha = 1$
case, i.e., the double--exponential or lasso prior, although results
are developed in generality when
possible. % behind the lasso estimator.
Section \ref{sec:discuss} briefly touches on the simpler $\alpha = 2$
case. %, the so-called ridge prior. %, and other $\alpha$'s.

 % However, the first portion of our
% discussion focuses on the likelihood, which does not depend on the
% $\alpha$ and $\nu$.  Therefore we continue with condensed notation
% $p_{\kappa, \alpha}(\beta|\nu,\sigma^2) \equiv
% p_\kappa(\beta|\sigma^2)$ until Section \ref{sec:hierlaplace}.

\subsection{Hierarchical representation of the logistic}
\label{sec:hierlogit}

Extending a well-known technique for generating logistic regression
\citep[e.g.,][]{andrews:mallows:1974,holmes:held:2006}, we represent
the powered-up likelihood (\ref{eq:psl}) for $\beta$ as a marginal
quantity obtained after integrating over latent variables
$(z,\lambda)$, where $z = ( z_1, \ldots, z_n)$ and $\lambda = (
\lambda_1,\ldots,\lambda_n)$.  That is, each element of the product of
independent logistic terms can be written as a two-dimensional
integral:
\begin{equation}
L_\kappa(y|\beta) = \prod_{i=1}^n \int_0^{\infty} \int_0^\infty
p_\kappa(z_i | \beta, \lambda_i, y_i) p_\kappa(\lambda_i) \,
d\lambda_i \,dz_i. \label{eq:margl}
\end{equation}
This suggests a hierarchical representation in terms of latent
variables, $z_i$ for each $y_i$, mixed over $\lambda_i$.  It remains
to determine the appropriate form of $p_\kappa(z_i | \beta, \lambda_i,
y_i)$ and $p_\kappa(\lambda_i)$ so that $(1 + e^{-y_i x_i^\top
  \beta})^{-\kappa} = \int\!\int p_\kappa(z_i|\beta, \lambda_i, y_i)
p_\kappa(\lambda_i)\, d\lambda_i \, dz_i$.\footnote{The notation
  reserves $\pi(\cdot)$ for the marginal posterior $\beta$ as a visual
  queue for the quantity of primary interest.  All other probability
  densities use $p(\cdot)$, including the joint for latent
  $(z,\lambda)$ and all priors.}

Our key result, generalizing HH, relies on a scale mixture
representation of $z$--distributions \citep{bn:kent:sorensen:1982}.
These are characterized by their pdf as:
\begin{align}
Z(z; a, b, \sigma, \mu ) &
\equiv f_Z ( z | a , b, \sigma, \mu) = \frac{1}{\sigma B(a, b)} \frac{
  e^{a (z-\mu)/\sigma} }{ ( 1 + e^{(z-\mu)/\sigma} )^{ a +
    b} }  \label{eq:znorm} \\
&= \int_0^\infty \frac{1}{\sqrt{2 \pi \lambda\sigma^2}} 
\exp\left\{ - \frac{1}{2 \lambda \sigma^2} 
\left(z - \mu - \frac{1}{2}(a - b) \lambda \sigma\right)^2 \right\} 
 q_{ a , b } ( \lambda ) \,d \lambda \nonumber
\end{align}
where $q_{a, b}(\lambda)$ is a Polya distribution, i.e.,
an infinite sum of exponentials:
\begin{equation}
q_{a,b}(\lambda) = 
\sum_{k=0}^\infty w_k e^{ - \frac{1}{2} \psi_k \lambda } 
\;\;\;\;\; \mbox{where } \;\;\;
\psi_k = (a + k) (b + k), \label{eq:psi}
\end{equation}
and the weights $w_k$ are determined via $\delta = (a+b)/2$ and
$\theta = (a-b)/2$ as
\begin{equation}
  w_k  =  {-2\delta \choose k}
  \frac{ ( \delta + k )}{
    B( \delta+\theta , \delta - \theta )} 
  = \frac{(-1)^k (2 \delta ) \ldots ( 2
    \delta + k -1 )}{k !} 
  \frac{ ( \delta + k )}{B( \delta+\theta , \delta - \theta )}. \label{eq:w}
\end{equation}
This prior has a simple generative form:
\begin{equation}
\lambda \stackrel{D}{=} \sum_{k=0}^\infty 2 \psi_k^{-1} \epsilon_k, 
\;\;\;\;\; \mbox{where} \;\;\;
 \epsilon_k \sim \mathrm{Exp} (1). \label{eq:lamgen}
\end{equation}
Then, each component $(1 + e^{y x ^\top\beta })^{-\kappa}$ of the
likelihood (dropping $i$ subscripts) can be written as the cumulative
distribution function (cdf) evaluation (at zero) of a particular
$z$--distribution.  %Our first result follows (dropping $i$ subscripts).

\begin{theorem}
  The (powered up) logistic
  function may be represented as follows.
\begin{equation}
\left(\! 1 + e^{ -y x^\top \beta } \!\right)^{-\kappa }
\!= \int_0^{ \infty} \!\! \int_0^\infty   \frac{1}{\sqrt{2 \pi \lambda}}
 \exp \left \{  - \frac{1}{2 \lambda} 
   \left (z - y x^\top \beta - \frac{1}{2} 
     (1-\kappa) \lambda \right )^2 \right \}
 q_{1,\kappa}( \lambda ) \, d \lambda d z.
\label{eq:int}
\end{equation}
\label{thm:lhier}
\end{theorem}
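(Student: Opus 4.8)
The plan is to specialize the scale-mixture identity (\ref{eq:znorm}) to the parameter values that make its Gaussian mixing kernel coincide exactly with the integrand of (\ref{eq:int}), and then to recognize the resulting $z$-density and integrate it over $z$ in closed form. Reading off the right-hand side of (\ref{eq:int}), the drift--variance structure $z - yx^\top\beta - \frac{1}{2}(1-\kappa)\lambda$ with unit scale forces the choice $\mu = yx^\top\beta$, $\sigma = 1$, and $\frac{1}{2}(a-b) = \frac{1}{2}(1-\kappa)$; together with the appearance of $q_{1,\kappa}$ this pins down $a=1$ and $b=\kappa$. With these values the inner ($\lambda$) integral in (\ref{eq:int}) is \emph{verbatim} the second line of (\ref{eq:znorm}), so it collapses to the $z$-density $Z(z; 1, \kappa, 1, yx^\top\beta)$.

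First I would invoke Tonelli's theorem to justify performing the $\lambda$-integration before the $z$-integration: every factor in the integrand is nonnegative, so the order of integration is immaterial and no integrability hypothesis is needed. After the $\lambda$-integration the claim reduces to showing
\[
\int_0^\infty Z(z; 1, \kappa, 1, yx^\top\beta)\, dz
= \int_0^\infty \frac{1}{B(1,\kappa)}\,
\frac{e^{\,z - yx^\top\beta}}{\bigl(1 + e^{\,z - yx^\top\beta}\bigr)^{1+\kappa}}\, dz
= \bigl(1 + e^{-yx^\top\beta}\bigr)^{-\kappa},
\]
i.e.\ that the mass the $z$-distribution places on the positive half-line returns the powered-up logistic weight.

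Next I would evaluate this single integral explicitly. Substituting $u = z - yx^\top\beta$ and then the logistic change of variables $t = e^u/(1+e^u)$ turns the general $z$-density into the Beta kernel $\frac{1}{B(a,b)}\, t^{a-1}(1-t)^{b-1}$, so the integral becomes an incomplete Beta integral with lower limit $t_0 = (1+e^{yx^\top\beta})^{-1}$. The decisive simplification is that $a=1$ removes the $t^{a-1}$ factor, leaving the elementary $\frac{1}{B(1,\kappa)}\int_{t_0}^1 (1-t)^{\kappa-1}\,dt = (1-t_0)^\kappa$; since $1 - t_0 = (1+e^{-yx^\top\beta})^{-1}$ this is precisely the left-hand side.

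The main obstacle is essentially one of bookkeeping: matching the four parameters of (\ref{eq:znorm}) to the given integrand without sign or scaling errors (in particular confirming that the $\frac{1}{2}(a-b)\lambda\sigma$ drift term lines up with $\frac{1}{2}(1-\kappa)\lambda$ and that $\sqrt{2\pi\lambda\sigma^2}$ reduces to $\sqrt{2\pi\lambda}$), and then carrying the logistic substitution through cleanly. No deep estimate is required, because the $a=1$ case renders the otherwise-incomplete Beta integral elementary---this is exactly why the representation is stated with $q_{1,\kappa}$ rather than a general $q_{a,b}$.
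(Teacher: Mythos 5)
Your proposal is correct and follows essentially the same route as the paper's proof: both recognize the inner $\lambda$-integral as the density of $Z(z;1,\kappa,1,yx^\top\beta)$ via the scale-mixture identity (\ref{eq:znorm}) and then integrate that density over $(0,\infty)$ to recover $(1+e^{-yx^\top\beta})^{-\kappa}$. The only difference is that where the paper simply quotes the closed-form cdf $F_Z(z)=1-(1+e^{z-yx^\top\beta})^{-\kappa}$ for the $a=1$ case, you derive that tail probability explicitly via the incomplete Beta substitution---a worthwhile verification of the step the paper asserts, but the same argument.
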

\vspace{-1.25cm}
\begin{proof}
  If $z \sim Z(1, \kappa, 1, yx^\top \beta)$, then $F_Z(z) = 1 - (1 +
  e^{z - yx^\top \beta})^{-\kappa}$, giving $1 - F_Z(0) = (1 +
  e^{-yx^\top \beta})^{-\kappa}$.  In other words,
\begin{equation}
\left(1 + e^{ -y x^\top \beta } \right )^{-\kappa}
= \int_0^{ \infty} 
Z(z; 1, \kappa, 1, yx^\top \beta) \,dz,
\label{eq:int1}
\end{equation}
establishing the outer integration, over $z$, in Eq.~(\ref{eq:int}).
Applying the representation in Eq.~(\ref{eq:znorm}) yields the desired
result.
\end{proof}

The statistical implication of this is a hierarchical model which we
summarize in the following corollary.

\begin{corollary}
  The conditional distribution $p_\kappa(z_i | \beta, \lambda_i, y_i)$
  and the mixing distribution $q_{1,\kappa}(\lambda_i)$ imply that the
  latent $z_i$ follow
\begin{equation}
p_\kappa(z_i | \beta, \lambda_i, y_i) \equiv \mathcal{N}^+ \!\left( y_i x_i^\top
  \beta + \frac{1}{2} (1-\kappa) \lambda_i, \lambda_i \right),
\label{eq:zlat}
\end{equation}
where $\mathcal{N}^+$ is the normal distribution truncated to the
positive real line.
\label{cor:z}
\end{corollary}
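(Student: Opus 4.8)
The plan is to read the distributional content of the corollary directly off the integral representation in Theorem~\ref{thm:lhier}, matching it term by term against the scale-mixture definition of the $z$--distribution in Eq.~(\ref{eq:znorm}). First I would note that the integrand of Eq.~(\ref{eq:int}), prior to carrying out either integration, factors as a Gaussian kernel in $z$ times the Polya density $q_{1,\kappa}(\lambda)$. Comparing with the decomposition $p_\kappa(z_i\mid\beta,\lambda_i,y_i)\,p_\kappa(\lambda_i)$ posited in Eq.~(\ref{eq:margl}) then identifies the mixing law $p_\kappa(\lambda_i)$ with $q_{1,\kappa}(\lambda_i)$ and leaves the Gaussian kernel as the factor carrying the conditional law of $z_i$.

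Second, I would recognize that kernel,
\[
\frac{1}{\sqrt{2\pi\lambda}}\exp\left\{-\frac{1}{2\lambda}\left(z - yx^\top\beta - \tfrac{1}{2}(1-\kappa)\lambda\right)^2\right\},
\]
as the density of an ordinary normal with mean $yx^\top\beta+\tfrac12(1-\kappa)\lambda$ and variance $\lambda$. This is exactly the $a=1$, $b=\kappa$, $\sigma=1$, $\mu=yx^\top\beta$ specialization of the inner integrand in Eq.~(\ref{eq:znorm}), which fixes the mean and variance advertised in Eq.~(\ref{eq:zlat}); this matching is purely a substitution of parameters and requires no real work.

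The step that carries the genuine content is the truncation, and I expect it to be the only place demanding care. In Theorem~\ref{thm:lhier} the outer integral runs over $z\in(0,\infty)$ rather than all of $\mathbb{R}$; equivalently, the proof of that theorem expressed the powered-up logistic as $1-F_Z(0)$, the upper-tail mass of a full $z$--distribution. Thus the event carrying the likelihood information is $\{z>0\}$, and forming the conditional law of $z$ given $(\beta,\lambda,y)$ amounts to renormalizing the Gaussian kernel above over the positive half-line. The result is the normal truncated to $(0,\infty)$, namely $\mathcal{N}^+\!\bigl(yx^\top\beta+\tfrac12(1-\kappa)\lambda,\lambda\bigr)$; restoring the $i$ subscripts gives Eq.~(\ref{eq:zlat}) and completes the argument.
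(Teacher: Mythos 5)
Your proposal is correct and matches the paper's (implicit) argument: the paper states Corollary~\ref{cor:z} without a separate proof, treating it as an immediate reading of Theorem~\ref{thm:lhier}, which is precisely what you do---identifying the Gaussian kernel in $z$ and the Polya factor $q_{1,\kappa}$ in the integrand of Eq.~(\ref{eq:int}), and obtaining the truncation from the restriction of the outer integral to $z>0$ (equivalently, the $1-F_Z(0)$ upper-tail representation used in the theorem's proof). No gaps; the renormalization over the positive half-line is exactly what turns the kernel into $\mathcal{N}^+\!\left(y_i x_i^\top \beta + \tfrac{1}{2}(1-\kappa)\lambda_i, \lambda_i\right)$.
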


In more compact notation, $z|\beta, \lambda, y \sim
\mathcal{N}_n^+((y.X)\beta + \frac{1}{2}(1-\kappa)\lambda, \Lambda)$,
where $y = (y_1, \dots, y_n)^\top$, $y.X = \mathrm{diag}(y) X$,
$\Lambda = \mathrm{diag}(\lambda_1,\dots, \lambda_n)$, and the
truncation is to the all-positive orthant.  Observe that, when $\kappa
= 1$, the above formulation is identical to the generative model
described by HH.  Given predictors $x_i$ and regression coefficients
$\beta$, generate $y_i \in \{-1,+1\} $ as
\begin{align}
  y_i &= \mathrm{sign}(z_i), & \mbox{where} && z_i &\sim
  \mathcal{N}(x_i^\top \beta, \lambda_i) & \mbox{and} && \lambda_i
  &=\sum_{k=1}^\infty \frac{2}{(1+k)^2} \epsilon_k, \;\;\; \epsilon_k
  \stackrel{\mathrm{iid}}{\sim} \mathrm{Exp}(1).
\label{eq:hhlik}
\end{align}
When $\kappa > 1$, the asymmetry of the $z$--distribution makes it
harder to extract $y_i$ from $y_i x_i^\top \beta +
\frac{1}{2}(1-\kappa)\lambda_i$, the mean of the truncated normal in
Eq.~(\ref{eq:zlat}).  In Section \ref{sec:binom}, we indirectly
suggest that one can interpret $\kappa y_i$ as a binomial response
when $\kappa$ is an integer.

\subsubsection*{An alternative $z$--representation:}

Theorem \ref{thm:lhier} shows how components of the powered-up
logistic likelihood can be represented hierarchically by the cdf of
$z$--distributions. We therefore call that multiplicity extension to
HH the {\em cdf representation}.  However, further inspection reveals
that it is possible to eliminate an integral in Eq.~(\ref{eq:margl})
and thus $n$ latent variables, and use the representation
\begin{align*}
(1 + e^{z_i - \mu_i})^{-\kappa} &\equiv Z(z_i; a=0, b=\kappa, 1, \mu_i) \\
&= \int_0^\infty \frac{1}{\sqrt{2 \pi \lambda_i}} 
\exp\left\{ - \frac{1}{2 \lambda_i} 
\left(z_i - \mu_i + \frac{1}{2}\kappa \lambda_i\right)^2 \right\} 
 q_{0,\kappa } ( \lambda_i ) \,d \lambda_i
\end{align*}
which avoids integrating over $z_i$.  Instead, set them to zero (and
$\mu_i = y_ix_i^\top \beta$) and directly obtain $(1 + e^{y_ix_i^\top
  \beta})^{-\kappa}$.  By analogy, we call this a {\em pdf
  representation} as it involves evaluating a particular $z$-density
function.
% This says that, given $\lambda
% \stackrel{\mathrm{iid}}{\sim} q_{0,\kappa}$, the likelihood for
% $\beta$ can be derived by solving $ z_i \equiv 0 = x_i^\top \beta -
% \frac{1}{2}\lambda_i \kappa + \sqrt{\lambda_i} \varepsilon_i$, where
% $\varepsilon_i \stackrel{\mathrm{iid}}{\sim} \mathcal{N}(0,1)$.  
This simple representation is problematic, however, since the Polya
mixing density $q_{0,\kappa}$ is improper.  In particular, note that
$\psi_0 = 0$, resulting in a infinite weight in the generative
formulation (\ref{eq:lamgen}).

Fortunately, a similar representation
% , leading to convenient Gibbs sampling schemes.
may be generated
\begin{align} 
(1 + e^{-\mu})^{-\kappa} &\equiv  Z(z; a, b, 1, \mu)\Big{|}_{z=0}
\nonumber \\
&= e^{a\mu} \int_0^\infty
\frac{1}{\sqrt{2\pi\lambda}} \exp \left\{-\frac{1}{2\lambda}\left(-\mu -
    \frac{1}{2}(a-b)\lambda\right)^2 \right\} q_{a,b}(\lambda) \,
d\lambda
\label{eq:pdf}
\end{align}
which involves a proper Polya mixing density as long as $(a,b) > 0$ and $a+b
= \kappa$.  In Section \ref{sec:conds} \& \ref{sec:app}, we show how
the extra $e^{a\mu}\equiv e^{a y_i x_i^\top \beta}$ poses no problem
for efficient inference, and that $(a=\frac{1}{2}, b=\kappa -
\frac{1}{2})$ works well in practice. But first, we complete the
power-posterior specification with a family of regularization priors
on $\beta$.

\subsection{Prior regularization}
\label{sec:hierlaplace}

Regularization is achieved via a family of priors,
$p_{\kappa,\alpha}(\beta | \nu, \sigma^2)$, implementing the
$L_\alpha$-norm % This discussion uses general $\alpha
% > 0$ for completeness and to connect with previous work
% \citep[e.g.,][]{park:hastie:2008}.
via the decomposition $ p_{\kappa,\alpha}( \beta_j | \nu , \sigma^2 )
= \int p_{\kappa, \alpha}( \beta_j | \omega_j , \nu , \sigma^2)
p_\alpha( \omega_j )\, d \omega_j$, following \cite{carlin:polson:1991} and
\cite{park:casella:2008} % similarly use use auxiliary $\omega_j$ variables 
in regularized (Bayesian) linear regression context.  The
idea is that, given $\beta_j = \frac{\nu}{\kappa^{1/\alpha}}
\sigma_j\sqrt{\omega_j} \epsilon_j$ and $\epsilon_j
\stackrel{\mathrm{iid}}{\sim} \mathcal{N}(0,1)$, small $\nu$ (i.e.,
heavy regularization) and large $\kappa$ (i.e., heavy concentration of
power-posterior density around the mode at the origin) {\em both}
shrink $\beta_j$ towards zero. We provide $p_\alpha( \omega )$
yielding the desired regularization penalty which, after unpacking
factors from $C_{\kappa, \alpha}(\nu)$ in Eq.~(\ref{eq:psl}), is
\begin{equation}
  \label{eq:prior}
  p_{\kappa, \alpha}(\beta|\nu, \sigma^2)  = \prod_{j=1}^p
  p_{\kappa, \alpha} (\beta_j|\nu, \sigma_j^2)  
  %= \left(\frac{\alpha}{\nu \Gamma(1/\alpha)}\right)^p 
  \propto \nu^{-p\kappa}
  \exp \left ( - \kappa \sum_{j=1}^p 
    \left|\frac{\beta_j}{\nu\sigma_j}\right|^\alpha  \right).
\end{equation}
\cite{box:tiao:1973} provide a general discussion of (\ref{eq:prior})
in the linear regression context.  Some notable special cases in the
recent literature on sparse logistic regression include the following:
when $\nu = 1$, $\alpha = 1$, and $\sigma_j = \lambda_j$ it is the
Laplace prior used in \cite{genkin:lewis:madigan:2007}; when $\alpha =
2, \sigma_j = 1$ and $\nu = \sigma^2$ it is the Gaussian prior, and
when $\alpha = 2, \sigma_j = 1$ and $\nu^{-1} = \lambda$ it is the
Laplace prior from \cite{krish:etal:2005}.\footnote{The $\lambda_j$
  and $\lambda$ variables correspond to the shrinkage parameters so
  named in our references.  They should not be confused with the
  latent $\lambda_i$ used in our hierarchical likelihood
  representation.}  Inference for $\nu$ in these cases typically
proceeds by CV, or by inspecting the paths of $\hat{\beta}_\nu$
solutions for varying $\nu$.  Assessing the uncertainty in estimators
$\hat{\beta}_{\hat{\nu}}$ on the final choice of $\hat{\nu}$ can pose
difficulties. % See \cite{mallows:1973}
% for a discussion of CV and related methods on model choice.

Power posterior analysis offers an intriguing, third, option by
providing the potential for tractable marginalization over prior
uncertainty $\nu \sim p_{\kappa,
  \alpha}(\nu)$. %to fully account for its uncertainty.
Two particular choices in the $\alpha = 1$ case lead to efficient
inference by Gibbs sampling [Section \ref{sec:conds}].  One option is
an inverse gamma (IG) prior for $\nu^2$ with shape $r_\kappa = \kappa
(r+1) - 1$ and scale $d_\kappa = \kappa d$, where $\kappa=1$ yields a
base case $\mathrm{IG}(\nu^2; r, d)$ prior.  The second option is IG for
$\nu$, with identical powering-up identities.  It has lighter tails in
$\nu^{-1}$, thus providing more aggressive shrinkage.  % It is
% important to recognize that marginalizing over $\nu$ leads to subtly
% different parameter penalties compared to the classical (e.g., CV)
% approach as the induced penalty becomes a complicated function of the
% $\beta_j$'s.

The prior in Eq.~(\ref{eq:prior})---for the purposes of efficient
inference [Section \ref{sec:inference}]---is an adaptation of a scale
mixture of normals result from \cite{west:1987} to account for
$\kappa$.  Specifically,
  \begin{equation}
    \label{eq:alpha-cdlike}
  p_{\kappa, \alpha}(\beta_j | \nu, \sigma_j^2)   = 
  \int_{\mathbb{R}_+} \mathcal{N}\!\left(\beta_j;
    0, \omega_j \cdot \frac{\nu^2  \sigma_j^{2}}{\kappa^{2/\alpha}}\right) 
    p_\alpha(\omega_j) \ d\omega_j,
  \end{equation}
  where $p_\alpha(\omega_j) \propto \omega_j^{-\frac{3}{2}}
  \mathrm{St}_{ \frac{\alpha}{2} }(\omega_j^{-1} )$ and
  $\mathrm{St}_{\alpha/2 }^+$ is the density function of a positive
  stable random variable of index $\alpha/2$.  In compact notation,
  $\beta|\sigma^2, \omega, \nu, \kappa \sim \mathcal{N}_p(0,
  \nu^2/\kappa^{2/\alpha} \Sigma \Omega)$ where $\Sigma =
  \mathrm{diag}(\sigma_1^2, \dots, \sigma_p^2)$ and $\Omega =
  \mathrm{diag}(\omega_1, \dots, \omega_p)$.  An important corollary,
  obtained by adapting an \cite{andrews:mallows:1974} result, is that
  if $\alpha = 1$, $\omega_j \stackrel{\mathrm{iid}}{\sim}
  \mathrm{Exp}(2)$, and $\sigma_j = 1$ for $j=1,\dots,p$ then
  $p_\kappa(\beta | \nu)$ is double exponential (Laplace) with a mean
  zero and scale $\nu^2/\kappa^2$.

\section{Simulation-based logistic regression}
\label{sec:inference}

We develop a Gibbs sampling algorithm [Section
\ref{sec:anneal}] for sampling the augmented power-posterior $p_\kappa(\beta,
z, \omega, \lambda, \nu| y, \sigma^2)$, for any $\kappa$.  We first
derive the relevant posterior conditionals [Section \ref{sec:conds}],
treating cdf and pdf representations in turn.  When $\kappa = 1$ the
marginal samples of $\beta$ summarize the posterior distribution of
the main parameters of interest. Obtaining the MAP or MLE requires an
inhomogeneous Markov chain [Section \ref{sec:anneal}].  Finally, we
describe how a vectorized $\kappa$ can facilitate efficient Bayesian
binomial regression [Section \ref{sec:binom}].

\subsection{Posterior conditionals}
\label{sec:conds}

To begin, consider the latent $z$ and $\lambda$ variables in the cdf
and pdf representations, in turn, followed by the coefficients $\beta$
and corresponding regularization prior parameters $(\omega, \nu)$.

\subsubsection*{Latent likelihood parameters $(z,\lambda)$}

By construction [Eq.~(\ref{eq:zlat}) of Corollary \ref{cor:z}], the
posterior full conditional for the latents, $p_\kappa(z_i| \beta, \lambda_i,
y_i)$, is a truncated (non-negative) normal distribution.  Obtaining
samples, independently for $i=1,\dots,n$, is straightforward following
the methods of \cite{robert:1995}.

Sampling from the full conditional $p_\kappa(\lambda_i| \beta, z_i,
y_i)$ is complicated by the infinite sum in the expression for the
prior (\ref{eq:psi}), which precludes a na\"ive approach via
truncation since certain combinations of $\lambda_i$ and $b \equiv
\kappa$ can give highly inaccurate, even negative, evaluations.  HH
derive an expression for this conditional when $\kappa = 1$ and
provide a rejection sampling algorithm by squeezing
\citep{devroye:1986}.  Although adaptable for general $\kappa$, we
prefer a Rao--Blackwellized approach.  Interchanging the order of
integration in Eq.~(\ref{eq:int}) suggests a corollary to Theorem
\ref{thm:lhier} that is helpful in constructing a Metropolis--Hastings
(MH) scheme for obtaining $\lambda_i$ draws.

\begin{corollary}
  The following is an alternate integral representation of the
  logistic function
\[
\exp \left \{ - \kappa 
\ln \left ( 1 + e^{ - y_i x_i^\top \beta } \right ) \right \}
 = \int_0^\infty  \Phi \left( 
\frac{ - y_i x_i^\top \beta - \frac{1}{2} ( 1 - \kappa) 
\lambda_i }{\sqrt{\lambda_i}} \right )
 q_{1,\kappa}(\lambda_i )d \lambda_i,
\]
where $\Phi$ is the cdf of the standard normal distribution.
\end{corollary}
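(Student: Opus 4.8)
The plan is to obtain the claim directly from the two-dimensional representation in Eq.~(\ref{eq:int}) of Theorem~\ref{thm:lhier} by performing the inner integration over $z_i$ in closed form. The left-hand side here equals $(1+e^{-y_i x_i^\top\beta})^{-\kappa}$, which that theorem already expresses as a double integral of a Gaussian kernel against $q_{1,\kappa}(\lambda_i)$, so it suffices to collapse the $z_i$--integral and identify what remains as a standard normal cdf evaluation.

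First I would interchange the order of integration in Eq.~(\ref{eq:int}) so that the $z_i$--integral is innermost; this is immediate from Tonelli's theorem because the integrand is nonnegative, so no absolute-integrability bookkeeping is required. Holding $\lambda_i$ fixed, write $m_i = y_i x_i^\top\beta + \tfrac{1}{2}(1-\kappa)\lambda_i$ for the mean appearing in Eq.~(\ref{eq:zlat}). The inner integral is then the mass an $\mathcal{N}(m_i,\lambda_i)$ law places on the positive half-line, namely
\[
\int_0^\infty \frac{1}{\sqrt{2\pi\lambda_i}}\exp\left\{-\frac{(z_i-m_i)^2}{2\lambda_i}\right\}dz_i
= 1-\Phi\!\left(\frac{-m_i}{\sqrt{\lambda_i}}\right)=\Phi\!\left(\frac{m_i}{\sqrt{\lambda_i}}\right),
\]
obtained by the substitution $w=(z_i-m_i)/\sqrt{\lambda_i}$. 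Reinserting the $q_{1,\kappa}(\lambda_i)$ factor and integrating over $\lambda_i$ delivers $\int_0^\infty \Phi(m_i/\sqrt{\lambda_i})\,q_{1,\kappa}(\lambda_i)\,d\lambda_i$, the asserted one-dimensional representation. This is precisely the Rao--Blackwellized form alluded to in the surrounding text: Eq.~(\ref{eq:int}) with the latent $z_i$ analytically marginalized, leaving an integrand in $\lambda_i$ alone that is suitable for a Metropolis--Hastings proposal.

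The step demanding care---and the only genuine obstacle---is the sign inside $\Phi$. The standardization above produces $\Phi$ evaluated at the \emph{positive} standardized mean $m_i/\sqrt{\lambda_i}$, and this orientation matters because $q_{1,\kappa}$ is a proper density (it integrates to one, being the mixing law of the $z$--distribution in Eq.~(\ref{eq:znorm})), so an argument of the opposite sign would instead reproduce $1-(1+e^{-y_i x_i^\top\beta})^{-\kappa}$. I would confirm the orientation on the case $\kappa=1$, where $m_i=y_i x_i^\top\beta$, $q_{1,1}$ is the usual Polya mixing law, and the identity collapses to the classical logistic-as-scale-mixture-of-normals statement $(1+e^{-y_ix_i^\top\beta})^{-1}=\int_0^\infty \Phi(y_ix_i^\top\beta/\sqrt{\lambda_i})\,q_{1,1}(\lambda_i)\,d\lambda_i$ underlying the HH model in Eq.~(\ref{eq:hhlik}). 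Since that check forces the argument of $\Phi$ to carry the sign of $y_ix_i^\top\beta$, the leading minus sign in the displayed statement appears to be a transcription slip relative to this derivation.
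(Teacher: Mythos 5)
Your proposal is correct and is essentially the paper's own (implicit) argument: the corollary is obtained exactly by interchanging the order of integration in Eq.~(\ref{eq:int}) via Tonelli and evaluating the inner Gaussian integral as the mass that $\mathcal{N}(m_i,\lambda_i)$ places on $(0,\infty)$, namely $\Phi\!\left(m_i/\sqrt{\lambda_i}\right)$ with $m_i = y_i x_i^\top \beta + \frac{1}{2}(1-\kappa)\lambda_i$. Your sign diagnosis is also right: the derivation forces the argument of $\Phi$ to be $+m_i/\sqrt{\lambda_i}$ (as your $\kappa=1$ check against the HH mixture confirms, and as one also sees by letting $y_i x_i^\top\beta \to +\infty$, where the left side tends to $1$ but the displayed right side would tend to $0$), so the leading minus sign in the stated corollary---which recurs in Eq.~(\ref{eq:lama}) and in Figure~\ref{f:gs}---is a sign error in the paper's display rather than a flaw in your proof.
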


\noindent Proposals $\lambda_i' \sim q_{1,\kappa}(\lambda)$ can then
be accepted via MH with probability $\min\{1,A_i\}$ where
\begin{equation}
A_i = \frac{ \Phi\{(- y_i x_i^\top \beta - \frac{1}{2} ( 1 - \kappa)
\lambda_i')/\sqrt{\lambda_i'}\}}{\Phi\{(- y_i x_i^\top \beta - \frac{1}{2} ( 1 - \kappa)
\lambda_i)/\sqrt{\lambda_i}\}}.
\label{eq:lama}
\end{equation}
% The Chambers--Mallows--Stuck method
% \citep[CMS][]{chambers:mallows:stuck:1976} may be used to simulate
% from the prior \citep[see also][]{weron:1996}.
%
Good proposals may be obtained by truncating the sum in
Eq.~(\ref{eq:lamgen}) at $K=100$ for $\kappa=b=1$, with improvements
for larger $\kappa$.  Direct sampling is also possible
\citep[e.g.,][]{weron:1996}.

Empirically, the MH acceptance rate is high ($>$ 90\%) for $\kappa =
1$ because posterior is similar to the prior ($q_{1,1}$).  Therefore
the MH scheme may be preferable to the rejection/squeezing method of
HH who report acceptance rates as low as 25\%.  Both rates decline as
$\kappa$ is increased, but the MH rate is still above $1\%$ for
$\kappa = 20$.  A good rule of thumb is to thin $\lceil \kappa \rceil$
draws for each draw saved, which is reasonable from a computational
standpoint as sampling from $q_{a,b}$ is fast.  Even when thinning
more than 10-fold, the MH sampler is competitive to HH/Devroye in
terms of sheer speed.  The MH requires two $\Phi$ evaluations, a few
arithmetic operations, and two square roots.  HH/Devroye, by contrast,
can perform dozens (or more) expensive operations such as {\tt pow}
before the squeeze is made.% \footnote{{\tt pow} in {\sf C} {\tt math}
  % library is more expensive than {\tt pnorm} in {\sf R}'s standalone
  % library for {\sf C}}
Finally, drawing $\lambda_i$ unconditional on $z_i$ yields lower
autocorrelation in the overall joint MCMC sampling scheme.

The pdf representation is simpler since $z_i$ is set to zero. Proposed
$\lambda_i' \sim q_{a,b}$ may be accepted or rejected via MH by
exchanging a cdf for a pdf in Eq.~(\ref{eq:lama}) and replacing
$\frac{1}{2}(1-\kappa)$ with $\frac{1}{2}(a-b)$.  Another feature that
works well for the pdf representation is an adaptation of the slice
sampler of \citet{godsill:2000}.  Given $\lambda_i$, the next sample
$\lambda_i'$ may be obtained via an auxiliary uniform random variable
as follows.  Let $\phi_i \equiv \phi\{(-y_i x_i^\top \beta
+\frac{1}{2}(a-b)\lambda_i)/\sqrt{\lambda_i}\}$, where $\phi$ is the
pdf of a standard normal distribution.  Then sample
\begin{align}
u | \lambda_i, x_i, y_i, \beta &\sim 
U[0, \phi_i], && \mbox{followed by} &
\lambda_i'| u, x_i, y_i, \beta &\sim q_{a,b}(\lambda_i') 
\mathbb{I}_{\{\phi_i' > u\}}, \label{eq:slice}
\end{align}
where the second step is facilitated by accept/rejects following
random draws from the Polya mixing density.  Although more automatic
in that it does not require thinning, we show in Section \ref{sec:st}
that the MH scheme is faster overall.  The two methods behave
similarly when $\kappa$ gets large, causing the rate of
rejections/required thinning to increase.

\subsubsection*{Regularized regression coefficient parameters
 $(\beta,\omega,\nu)$}

In the cdf representation, the multivariate normal priors for $z$
[Section \ref{sec:hierlogit}] and $\beta$ [Section
\ref{sec:hierlaplace}] combine to give $\beta | z, \omega,
\lambda, \nu, \kappa \sim \mathcal{N}_p(\tilde{\beta}, V)$ with
hyperparameters
\begin{align*} 
&& \tilde{\beta} &= V (y.X)^\top \Lambda^{-1}\left(z -
  \frac{1}{2}(1-\kappa)\lambda\right), && \mbox{and} &
V^{-1} &= (\nu/\kappa^{1/\alpha})^{-2} \Sigma^{-1} \Omega^{-1}
+ (y.X)^\top \Lambda^{-1} (y.X).
\end{align*}
% By analogy to standard results for $\beta|\dots$ in the linear
% regression context, this is a ridge regression estimator. 
Obtaining $V$ from $V^{-1}$ is generally $O(p^3)$, which represents a
significant computational burden in the $p\gg n$ context. By employing
the Sherman--Morrison--Woodbury formula
\citep[e.g.,][pp.~67]{berns:2005}, it is possible to use an $O(n^3)$
operation instead, which could represent significant savings.  In the
pdf representation a similar combination of regularization penalties
and likelihoods gives an identical $V^{-1}$ expression, but a new
$\tilde{\beta} = (a - \frac{1}{2}[a-b]) V X^\top y$ [see Appendix
\ref{sec:beta}].  Choosing $(a=\frac{1}{2},b=\kappa - \frac{1}{2})$
gives $\tilde{\beta} = \frac{\kappa}{2} V X^\top y$, a particularly
simple expression that may be used for $\kappa > \frac{1}{2}$.  It is
interesting to observe that the parameters $(\lambda, \omega, \nu)$ only
enter into the conditional for $\beta$ through $V$ in the pdf
representation.

The full conditional distribution of each latent $\omega_j$ is
proportional to the integrand of Eq.~(\ref{eq:alpha-cdlike}). When
$\alpha = 1$ we have the following adaptation of a standard result.
\begin{corollary}
  \label{cor:omega-full-cond}
  For $\alpha=1$, the full conditional distribution of the reciprocal
  of $\omega_j^{-1}$ follows an inverse Gaussian distribution:
  $\omega_j^{-1} |\beta_j,\nu, \kappa \sim \ig (\frac{\nu}{\kappa}
  |\frac{\beta_j}{\sigma_j}|^{-1} , 1)$.
\end{corollary}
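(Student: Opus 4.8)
The plan is to read the full conditional for $\omega_j$ straight off the integrand of Eq.~(\ref{eq:alpha-cdlike}) and then transform to the reciprocal. Specializing to $\alpha=1$ gives $\kappa^{2/\alpha}=\kappa^2$, and the mixing density $p_1(\omega_j)\propto \omega_j^{-3/2}\mathrm{St}_{1/2}(\omega_j^{-1})$ collapses, by the Andrews--Mallows identity recorded immediately after Eq.~(\ref{eq:alpha-cdlike}), to an exponential (the stated $\mathrm{Exp}(2)$ prior, with rate $\tfrac12$). Retaining only the $\omega_j$-dependent factors of the Gaussian kernel $\mathcal{N}(\beta_j;0,\omega_j\nu^2\sigma_j^2/\kappa^2)$---the normalizer $\omega_j^{-1/2}$ and the exponent---together with this prior, I would obtain
\begin{equation*}
p_\kappa(\omega_j \mid \beta_j, \nu) \;\propto\; \omega_j^{-1/2}
\exp\!\left\{ -\frac{\kappa^2\beta_j^2}{2\nu^2\sigma_j^2}\,\frac{1}{\omega_j}
-\frac{1}{2}\,\omega_j \right\}.
\end{equation*}

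Next I would recognize this as a generalized inverse Gaussian kernel of index $+\tfrac12$, i.e.\ $\omega_j\sim\gig$, whose reciprocal is inverse Gaussian. To make this concrete and self-contained, I would change variables directly to $\eta_j\equiv\omega_j^{-1}$, the quantity of interest, with Jacobian $|d\omega_j/d\eta_j|=\eta_j^{-2}$, giving
\begin{equation*}
p_\kappa(\eta_j \mid \beta_j, \nu) \;\propto\; \eta_j^{-3/2}
\exp\!\left\{ -\frac{\kappa^2\beta_j^2}{2\nu^2\sigma_j^2}\,\eta_j
-\frac{1}{2\eta_j} \right\},
\end{equation*}
which is exactly the kernel of an inverse Gaussian density in $\eta_j$.

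Finally I would match parameters against the standard form $\ig(x;\mu,\lambda)\propto x^{-3/2}\exp\{-\lambda x/(2\mu^2)-\lambda/(2x)\}$. Equating the coefficient of $1/\eta_j$ gives $\lambda=1$, and equating the coefficient of $\eta_j$ gives $\lambda/(2\mu^2)=\kappa^2\beta_j^2/(2\nu^2\sigma_j^2)$, hence $\mu=(\nu/\kappa)\,|\beta_j/\sigma_j|^{-1}$, which is precisely the asserted $\omega_j^{-1}\mid\beta_j,\nu,\kappa\sim\ig\big((\nu/\kappa)|\beta_j/\sigma_j|^{-1},1\big)$.

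The only genuinely delicate point is confirming that the $\alpha=1$ specialization of the positive-stable mixing density is exponential with rate \emph{exactly} $\tfrac12$, since that rate is what forces the shape parameter to equal $1$; this is the Andrews--Mallows scale-mixture identity the paper already invokes, so I would cite it rather than reprove it. Everything else reduces to a one-line change of variables and a coefficient match, with the mild bookkeeping observation that $\sigma_j$ enters only through the Gaussian variance, not the mixing density, and therefore appears solely inside $\mu$.
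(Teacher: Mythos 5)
Your proof is correct and follows essentially the same route as the paper's: read the $\omega_j$ full conditional off the integrand of Eq.~(\ref{eq:alpha-cdlike}) with the $\mathrm{Exp}$ (rate $\tfrac{1}{2}$) mixing prior, recognize a $\gig\bigl(\tfrac{1}{2},\cdot,\cdot\bigr)$ kernel, and pass to the reciprocal. The only difference is presentational---you perform the change of variables to $\omega_j^{-1}$ and match coefficients explicitly rather than citing the appendix's GIG-to-$\ig$ lemma, which has the side benefit of landing exactly on the stated mean parameter $\frac{\nu}{\kappa}\bigl|\frac{\beta_j}{\sigma_j}\bigr|^{-1}$ (the paper's own final proof line misprints this by dropping the exponent $-1$).
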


\begin{proof}
From the integrand in Eq.~\eqref{eq:alpha-cdlike} with $\alpha =
1$ we have
\begin{align*}
p_\kappa( \omega_j | \beta_j , \nu ) & \propto \frac{1}{ \sqrt{2 \pi \omega_j}} 
\exp\left\{ - \frac{1}{2} \left
    ( \frac{ \kappa^2 \beta_j^{2} }{\nu^2 \sigma_j^2 \omega_j} 
+ \omega_i \right ) \right\} 
 \equiv \gig\!\left (\omega_j;  \frac{1}{2} , 1 ,  \frac{\kappa^2
     \beta_j^2}{\nu^2 \sigma_j^2} \right ),
\end{align*}
which is implies that $\omega_j^{-1} \sim
\mathrm{IN}\left(\frac{\nu}{\kappa}\left|
    \frac{\beta_j}{\sigma_j}\right|, 1\right)$. [See Appendix
\ref{sec:gig} for IN/GIG definitions].  % See Appendix \ref{sec:gig} for information
% on the relationship between the $\ig$ and generalized inverse Gaussian
% ($\gig$) distributions.
\end{proof}

Our IG priors for $\nu$ are both conditionally conjugate. An IG prior
for $\nu^2$ and the representation in Eq.~(\ref{eq:alpha-cdlike}) gives
\begin{align*}
  \nu^2 |\beta, \omega, \kappa \sim \mathrm{IG}\left( r_\kappa +
    \frac{\kappa p}{2}, d_\kappa + \frac{\kappa^2}{2} \sum_{j=1}^p
    \frac{\beta_j^2}{\sigma_j^2 \omega_j}\right).  \intertext{An IG
    prior for $\nu$ leads to efficiency gains (in addition to better
    tail properties) since there is no conditioning on $\omega$.
    Using Eq.~(\ref{eq:prior}) directly in then gives} \nu
  |\beta, \kappa \sim \mathrm{IG}\left( r_\kappa + \kappa p, d_\kappa
    + \kappa \sum_{j=1}^p
    \left|\frac{\beta_j}{\sigma_j}\right|\right),
\end{align*}
extending the analysis of \cite{park:casella:2008}. 

% Finally, we may sample from the posterior predictive distribution
% (i.e., when $\kappa = 1$) using the cdf of the logistic distribution
% to estimate $\mathbb{P}(Y(x) = +1|\beta, y)$, at a new $x$ location,
% for each sample of $\beta|y$ obtained from the marginal chain via
% sampling from the joint posterior. An appropriate point estimator is
% the average of these probabilities.  Alternatively, we may follow the
% generative model (\ref{eq:hhlik}) [dropping the $i$ subscripts] and
% average the Bernoulli samples for $Y(x)|\beta$ for each $\beta|y$.

\subsection{Gibbs sampling and annealing for point estimators}
\label{sec:anneal}

% Based on the conditionals just derived, the 
A full Gibbs sampling
algorithm for both cdf and pdf representations is outlined in Figure
\ref{f:gs}.
\begin{figure}
%\singlespacing
\small
\centering
\fbox{
\begin{minipage}{16cm}
Inputs:
\begin{itemize}
\item Data: $n\times p$ response-multiplied design matrix $y.X$
\item Settings: multiplicity $\kappa > 0$; scale factors $\sigma_1,
  \dots, \sigma_p$ where $\Sigma = \mathrm{diag}(\sigma_1^2, \dots, \sigma_p^2)$; representation type $R \in
  \{\mathrm{cdf}, \mathrm{pdf}\}$; Polya parameters $(a,b)$ where
  $(a=1,b=\kappa)$ if $R = \mathrm{cdf}$ or $(a,b) > 0$ and $a+b =
  \kappa$, otherwise; prior parameters $(r_\kappa, d_\kappa) > 0$;
  sample size $S$
\item Initial values: $\beta^{(0)} = (\beta_1^{(0)}, \dots,
  \beta_p^{(0)})^\top$, $\nu^{(0)}$, latents $\Lambda^{(0)} =
  \mathrm{diag}(\lambda_1^{(0)}, \dots, \lambda_n^{(0)})$, and if $R =
  \mathrm{cdf}$ also include latents $z^{(0)} = (z_1^{(0)}, \dots,
  z_n^{(0)})^\top$
\end{itemize}
Gibbs sampling, for iterations $s=1,\dots,S$:
\begin{enumerate}
\item For $j=1,\dots, p$ take 
$
\omega_j^{-1}
  \sim \ig \left(\frac{\nu^{(s-1)}}{\kappa}
  \left|\frac{\beta_j^{(s-1)}}{\sigma_j}\right|^{-1} , 1\right),
$
and let $\Omega^{(s)} = \mathrm{diag}(\omega_1^{(s)}, \dots, \omega_p^{(s)})$
\item For $i=1,\dots,n$ do the following depending on the
  representation $R$
  \begin{itemize}
    \item propose $\lambda_i' \sim q_{a,b}$ approximately via
      (\ref{eq:lamgen}) as  $\lambda_i' = \sum_{k=1}^K
      \frac{2 \epsilon_k}{(a+k)(b+k)}$, where $\epsilon_k
      \stackrel{\mathrm{iid}}{\sim} \mathrm{Exp}(1)$ and
      $K$ large
    \item draw $u \sim \mathrm{Unif}(0,1)$ and if $u < A_i$ where
\[
A_i = \left\{ \begin{array}{rl}
\frac{ \Phi\left\{(- y_i x_i^\top \beta^{(s-1)} - \frac{1}{2} ( 1 - \kappa)
\lambda_i')/\sqrt{\lambda_i'}\right\}}{\Phi\left\{(- y_i x_i^\top \beta^{(s-1)} - \frac{1}{2} ( 1 - \kappa)
\lambda_i^{(s-1)})/\sqrt{\lambda_i^{(s-1)}}\right\}} & \mbox{if } \;
R=\mathrm{cdf} \\
\frac{ \phi\left\{(- y_i x_i^\top \beta^{(s-1)} - \frac{1}{2} ( 1 - \kappa)
\lambda_i')/\sqrt{\lambda_i'}\right\}}{\phi\left\{(- y_i x_i^\top \beta^{(s-1)} - \frac{1}{2} ( a - b)
\lambda_i^{(s-1)})/\sqrt{\lambda_i^{(s-1)}}\right\}}  & \mbox{otherwise}
\end{array} \right.
\]
then take $\lambda_i^{(s)} = \lambda_i'$, or  take $\lambda_i^{(s)} =
      \lambda_i^{(s-1)}$ otherwise.
    \end{itemize} 
Then let $\Lambda^{(s)} = \mathrm{diag}(\lambda_1^{(s)}, \dots,
\lambda_p^{(s)})$ and  $\lambda^{(s)} = (\lambda_1^{(s)}, \dots,
\lambda_p^{(s)})^\top$

\item If $R = \mathrm{cdf}$ then for $i=1,\dots,n$ draw
$
z_i^{(s)} \sim \mathcal{N}^+ \!\left( y_i x_i^\top
  \beta^{(s-1)} + \frac{1}{2} (1-\kappa) \lambda_i^{(s)},
  \lambda_i^{(s)} \right)
$, and collect them as $z^{(s)} = (z_1^{(s)}, \dots, z_n^{(s)})^\top$
\item 
\begin{itemize}
\item Calculate $V^{-1(s)} = (\nu^{(s)}/\kappa)^{-2} \Sigma^{-1}
  \Omega^{-1(s)} + (y.X)^\top \Lambda^{-1(s)} (y.X)$. If $R =
  \mathrm{cdf}$ then calculate $\tilde{\beta} = V^{(s)} (y.X)^\top
  \Lambda^{-1(s)}\left(z^{(s)} -
    \frac{1}{2}(1-\kappa)\lambda^{(s)}\right)$, otherwise
  $\tilde{\beta}^{(s)} = (a - \frac{1}{2}[a-b]) V^{(s)} X^\top y$
\item  Draw $\beta^{(s)} \sim \mathcal{N}_p(\tilde{\beta}^{(s)}, V^{(s)})$ 
\end{itemize}
\item Draw $\nu^{(s)} \sim \mathrm{IG}\left( r_\kappa + \kappa p,
    d_\kappa + \kappa \sum_{j=1}^p
    \left|\frac{\beta_j^{(s)}}{\sigma_j}\right|\right)$
\end{enumerate}
Output: $\{\beta^{(s)}\}_{s=1}^S$, $\{\nu^{(s)}\}_{s=1}^S$, latents
$\{\lambda^{(s)}\}_{s=1}^S$, and if $R = \mathrm{cdf}$ also include latents
  $\{z^{(s)}\}_{s=1}^S$
\end{minipage}
}
\caption{Pseudocode for simulation based regularized logistic regression.}
\label{f:gs}
\end{figure}
For compactness, variations with slice sampling for $\lambda$ [in the
pdf case] or a prior on $\nu^2$ are not shown.  The former requires
replacing each iteration of step 2 by the method surrounding
Eq.~(\ref{eq:slice}).  The latter requires drawing $\nu^{2(s)} \sim
\mathrm{IG}\left( r_\kappa + \frac{\kappa p}{2}, d_\kappa +
  \frac{\kappa^2}{2} \sum_{j=1}^p \frac{\beta_j^{2(s)}}{\sigma_j^2
    \omega_j^{(s)}}\right)$ in step 5 and specification of
$\nu^{2(0)}$ on input.  Initial latent $\omega_j$ values are not
required.

The samples on output may be used to approximate expectations under
the power-posterior distribution with multiplicity $\kappa$.  If
$\kappa = 1$ then these are samples from a well-defined posterior
distribution which may be used, e.g., to approximate the posterior
mean of $\beta$ or provide samples from the posterior predictive
distribution.  Both take into account the full the uncertainties of all
parameters (including $\nu$) into account---a feature unique to full
Bayesian analysis.

Settings of $\kappa > 0$ are useful for finding other popular
estimators via {\em simulated annealing} (SA).
% \citep[SA;][]{kirkp:etal:1984}.  A general description of SA is left
% to our references. It suffices to say that 
In our context, SA establishes an inhomogeneous Markov chain over a
sequence of power-posteriors, starting with $\kappa = 1$ and then
increasing according to a pre-determined schedule.  Except when Gibbs
sampling is possible for all $\kappa$ (as for our power-posterior), it
is usually difficult to ensure that the Markov chain mixes well,
particularly when $\kappa$ increases.  A pragmatic approach starts at
$\kappa \approx 1$, and systematically makes modest increases in
$\kappa$ until Monte Carlo variation in the power-posterior
expectations of the quantities of interest is below a pre-determined
threshold.  Each annealing iteration is initialized with the last
value $\beta^{(S)}$, $\nu^{(S)}$, $\lambda^{(S)}$ and $z^{(S)}$, from
the previous iteration, thereby stitching the inhomogeneous Markov
chains together.  The chain for each $\kappa$ must have enough
iterations to establish convergence to its particular power-posterior.

Annealed procedures such as ours present an MCMC alternative to
EM-style algorithms.  Importantly, SA is known to converge to the
global optima in certain conditions (when $\kappa \rightarrow
\infty$), whereas EM is only guaranteed to find a local optima.
Although convergence for EM is usually quick, there are no guarantees
that it will be so and indeed there are examples, particularly in high
dimensional settings, where convergence can be arbitrarily slow.  SA
however, comes with the burden of choosing the schedule for increasing
$\kappa$.  We have found that for our regularized logistic regression
scheme, convergence is fast and mixing so good that short schedules
such as $\kappa = 1, 5, 10, 20$ are a safe default [see Section
\ref{sec:app}].  Even jumping immediately to modest $\kappa\;(\approx
20$), skipping $\kappa=1$, can very often yield cheap and accurate
approximations.

But perhaps the most noteworthy difference between our simulation
approach and previous methods (like EM) are the myriad of options
(beyond CV) for inferring $\nu$.  One option, in the classical
context, is to use annealing to find the joint mode of $(\beta, \nu)$.
Another option is to first use samples from the posterior marginal
$p(\nu|X, y)$ to estimate the posterior mean $\hat{\nu} = \E\{\nu|X,
y\}$, and then proceed to estimate $\hat{\beta} = \mathbb{E}_\kappa \{
\beta|\hat{\nu}, X, y\}$ as before.  In Figure \ref{f:gs} this would
be facilitated by inputting $\nu^{(0)} = \hat{\nu}$ and replacing step
5 with $\nu^{(s)} = \nu^{(s-1)}$.  Our experience is that the former
works well for small $p$ problems, and the latter for large $p$.  When
$p$ is large, the joint prior for $(\beta, \nu)$ dominates near the
posterior mode of $\nu$, which tends to zero and yields $\hat{\beta} =
0$, which is not helpful.  The marginal posterior mean is far less
sensitive to the regularization prior, and represents a more
convenient choice for large $p$ applications. In Section
\ref{sec:spam} we provide an example where the joint mode is easy to
find with a few dozen predictors, whereas an interaction expanded
version using thousands requires more care.

\subsection{Efficient handling of binomial data}
\label{sec:binom}

Another advantage of our approach is the extension to binomial data,
where binary responses are collected repeatedly and independently,
$n_i$ times for subjects with the same covariates $x_i$.  Contingency
tables are one important example.  A typical (un-regularized) logistic
regression model is $y_i|x_i \sim \mathrm{Bin}(n_i, \mu_i)$, where $\mu_i
= e^{\eta_i}/(1 + e^{\eta_i})$ and $\eta_i$ is linear in $x_i$. One
way to situate such data within this article's regularized logistic
regression framework is to {\em flatten} it, so that $n_i$ components
appear in the likelihood for each subject $i$: $\prod_{j=1}^{n_i} (1 +
e^{-y_{ij} x_i^\top \beta})^\kappa$, using the binary encoding $y_{ij}
\in \{-1,1\}$ giving $|\sum_{j=1}^{n_i} y_{ij} | = n_i$.  This allows
inference to proceed as described in Section \ref{sec:inference}, but
it can lead to an inefficient MCMC scheme if the $n_i$ are large due
to the $n_i$ latents required for each $i$.  It turns out that it is
possible to use only two latents for each $i$, echoing a feature of
methods described by \citet{frueh2:etal:2009}.

Observe that the component of the likelihood for subject $i$ may be
equivalently written with just two terms as $(1 + e^{- x_i^\top
  \beta})^{\kappa y_i} (1 + e^{x_i^\top \beta})^{\kappa (n_i - y_i)}$,
which is proportional to the $i^\mathrm{th}$ component of a typical
binomial likelihood with logit link.  Hence the full likelihood,
with $m$ unique subjects, can be written as $\prod_{i=1}^m (1 + e^{-
  x_i^\top \beta})^{\kappa_{i+}} (1 + e^{x_i^\top
  \beta})^{\kappa_{i-}}$, where $\kappa_{i+} = \kappa y_i$ and
$\kappa_{i-} = \kappa (n_i - y_i)$.  This is identical to a
$z$--distribution representation of the logistic likelihood with $2m$
terms, which may be much less than the $\sum_{i=1}^m n_i$ produced by
flattening.  The first $m$ terms use response ``data'' $y_i' = +1$
with multiplicity parameter $\kappa_{i+}$, and the second $m$ terms
use $y_i' = -1$ with $\kappa_{i-}$.  A {\em multiplicity
  implementation} is therefore facilitated by forming vectors $y'$ and
$\kappa'$, each of length $n=2m$, and using $\prod_{i=1}^n (1 + e^{-
  y_i' x_i^\top \beta})^{\kappa_{i}'}$.

The MCMC scheme proceeds as in Section \ref{sec:inference} by
vectorization.  For example, steps 2 and 3 for $z_i$ and $\lambda_i$
would use $\kappa_i'$ instead of $\kappa$.  For $\beta$ in step 4 with
$(a=0.5, b=\kappa-0.5)$ say, replace $\kappa 1_n$ with the $\kappa'$
vector in the expression for $\tilde{\beta}$.  Terms can be eliminated
from the likelihood, thus eliminating the corresponding latents, where
$\kappa_i' = 0$, as is the case when $y_i \in \{0,n_i\}$.  The
original, scalar, $\kappa$ is used for the conditionals corresponding
to the parameters of the prior.  For example, the posterior
conditional covariance $V$ of $\beta$ is
unchanged.  % Finally, if $\kappa'$ has components which are large (due
% to large $n_i$ and/or large $\kappa$) sampling from the conditional
% for $\lambda_i$ may be inefficient.  In that case it may actually be
% beneficial to partially flatten the likelihood for a hybrid
% implementation, trading extra latent $\lambda_i$ variables for more
% efficient sampling from the posterior conditionals.

\section{Applications}
\label{sec:app}

% To illustrate ideas, including shrinkage and simulated annealing we
% consider the Pima Indian data.  Then we consider a synthetic binomial
% example to compare the cdf and pdf representations, and schemes for
% sampling the latent $\lambda_i$.  Finally, we conclude with two
% comparisons pitting our estimators against one another and some modern
% alternatives.

\subsection{Pima Indian data}
\label{sec:pima}

The Pima Indian diabetes data [UCI Machine Learning Repository
\citep{Asuncion+Newman:2007}] includes outcomes for diabetes tests
performed on $n=768$ women of Pima heritage with 8 real-valued
predictors. Some of the predictors have many zeros, which may
reasonably be interpreted as ``missing'' values.  To remain consistent
with the treatment of this data by HH, and other authors, we do not
treat these values in any special way.  The following analysis
highlights properties of regularized estimators of $\beta = (\beta_0
\equiv \mu, \beta_1, \dots, \beta_8)$ obtained with $\alpha = 1$,
$\sigma_j = 1$ for $j=1,\dots,8$, and $T=1000$ samples from the
resulting posterior (the first 100 as burn-in).
% The MCMC mixing is very good [see Section \ref{sec:st} for mixing
% results].

\begin{figure}[ht!]
\centering
\includegraphics[scale=0.35,trim=5 30 20 0,clip=TRUE]{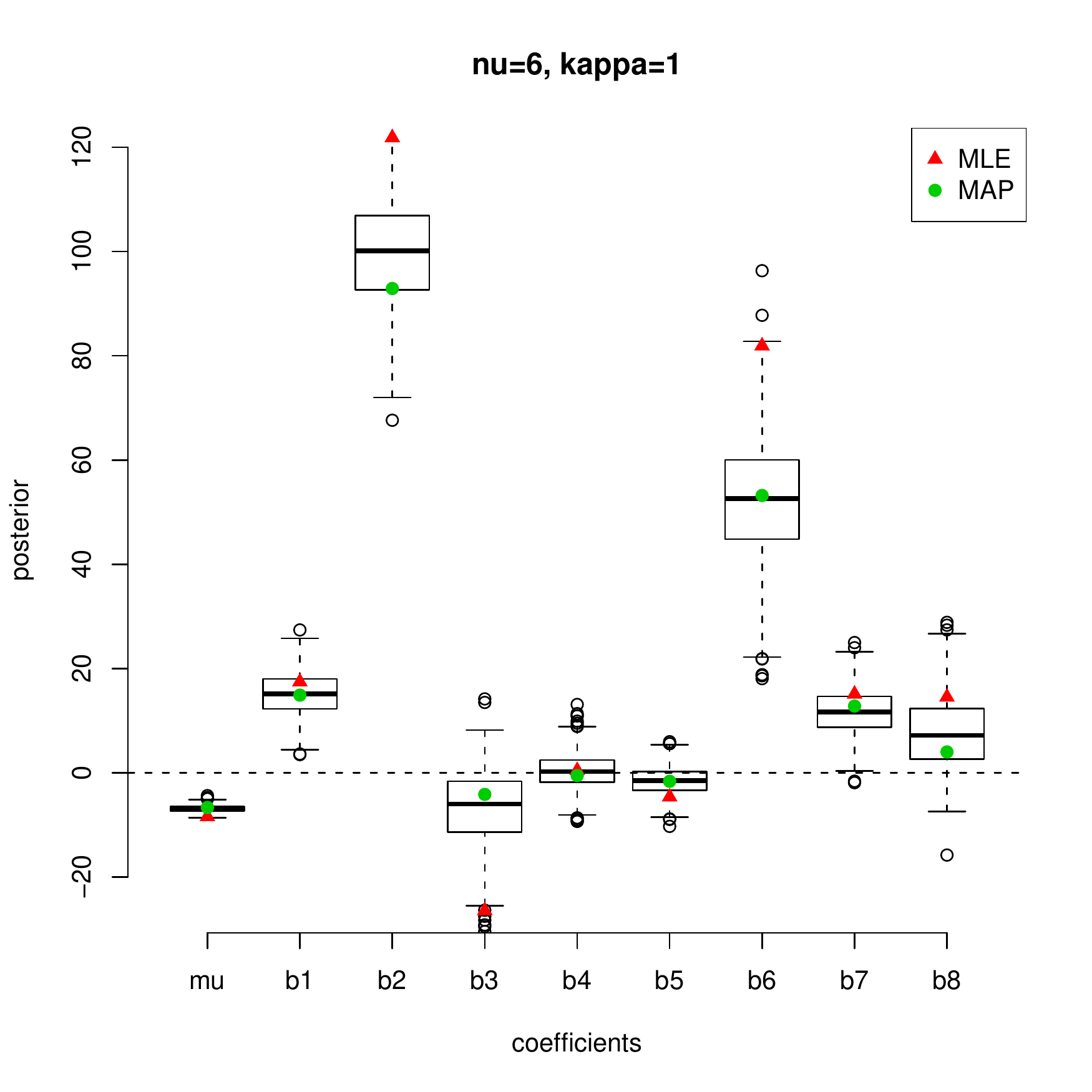}
\includegraphics[scale=0.35,trim=70 30 20 0,clip=TRUE]{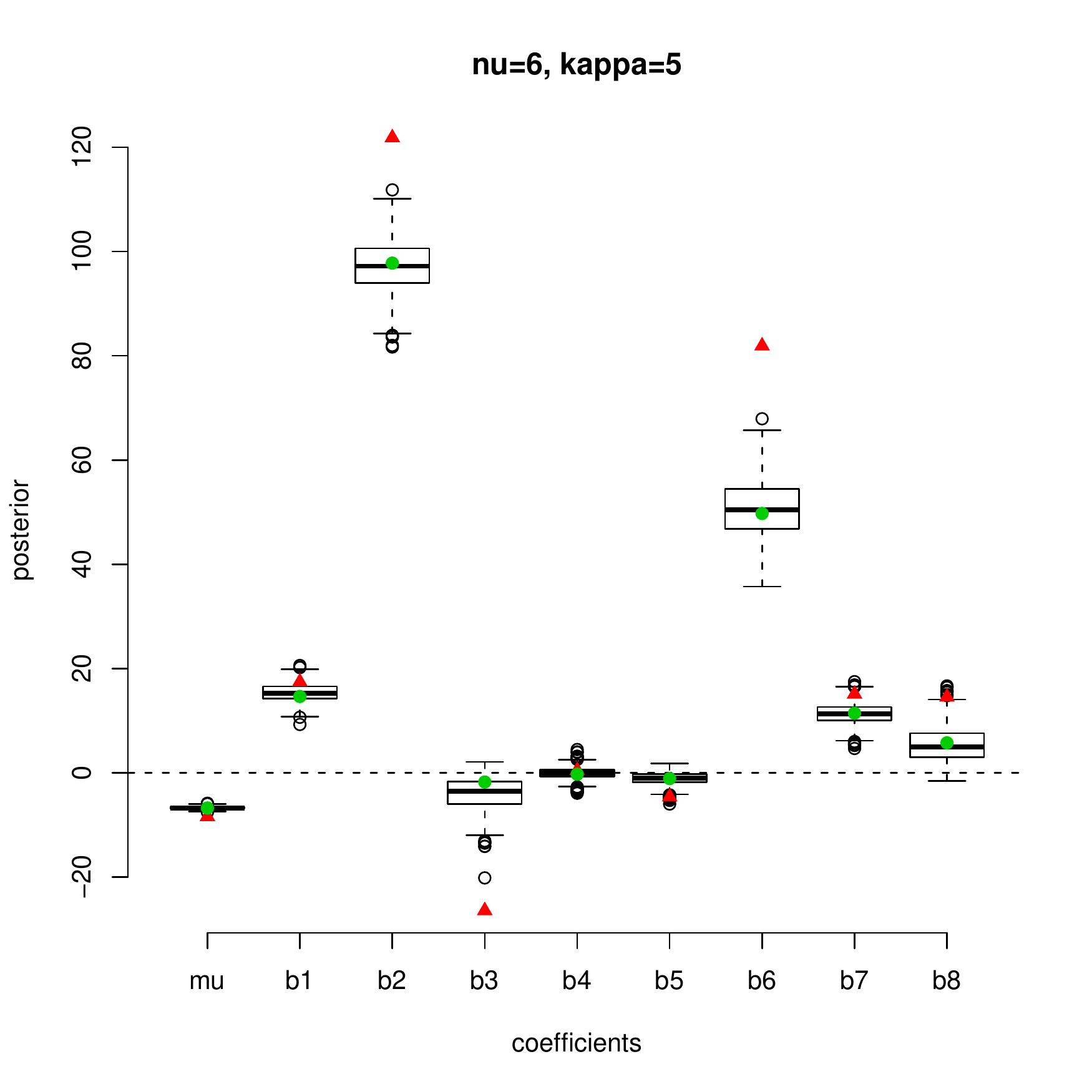}
\includegraphics[scale=0.35,trim=70 30 20 0,clip=TRUE]{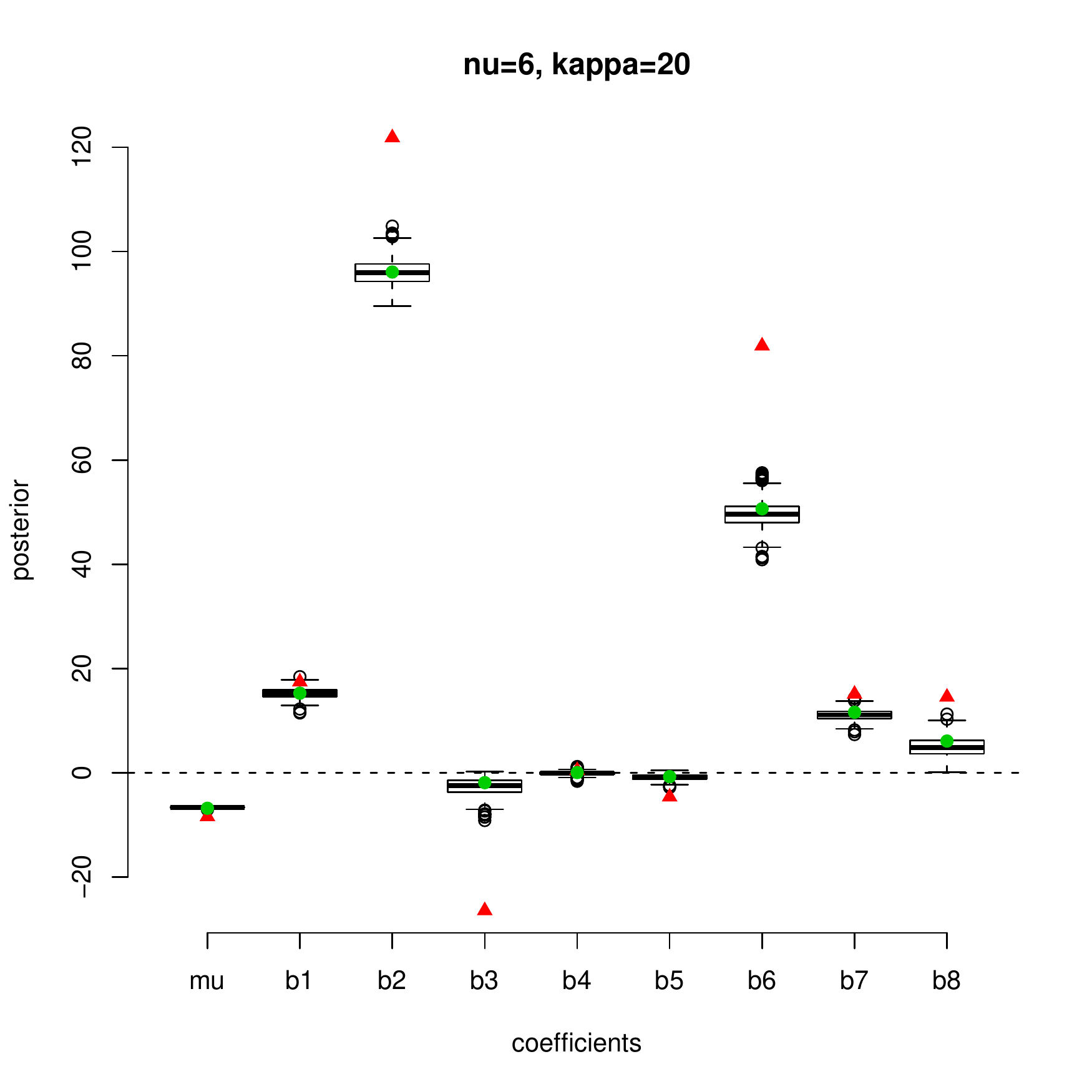}
\vspace{-0.25cm}
\caption{Power-posteriors for the Pima Indian data: $\nu = 6$, $\kappa
  \in \{1,5,20\}$.}
\label{f:pima:beta}
\end{figure}

Figure \ref{f:pima:beta} summarizes the marginal power posterior(s)
for $\beta$ with boxplots.  Three settings of $\kappa \in \{1,5,20\}$
(each panel) were used, and heavy regularization (fixing $\nu = 6$)
was applied.  Only the first panel ($\kappa = 1$) summarizes samples
from the true posterior.  The $\kappa > 1$ settings are useful for
obtaining other estimators.  The MLE, obtained from the {\tt glm}
command in {\sf R} \citep{cran:R}, and the MAP as estimated from the
sample(s), are also shown.  Shrinkage is apparent in the divergence
between the MAP and MLE values in all panels.  Observe how the
quartiles and outliers converge on the MAP as $\kappa$ is increased,
reflecting higher confidence in the accurate estimation of those
values.  Convergence is particularly rapid for the intercept term, and
the two coefficients with considerable mass near zero ($\beta_4$ and
$\beta_5$). These columns of $X$ have the highest concentration of
``missing'' values (30\% and 49\% respectively), so it is not
surprising the that MAP estimator excludes them.

\begin{figure}[ht!]
\centering
\includegraphics[scale=0.6,trim=15 0 15 40]{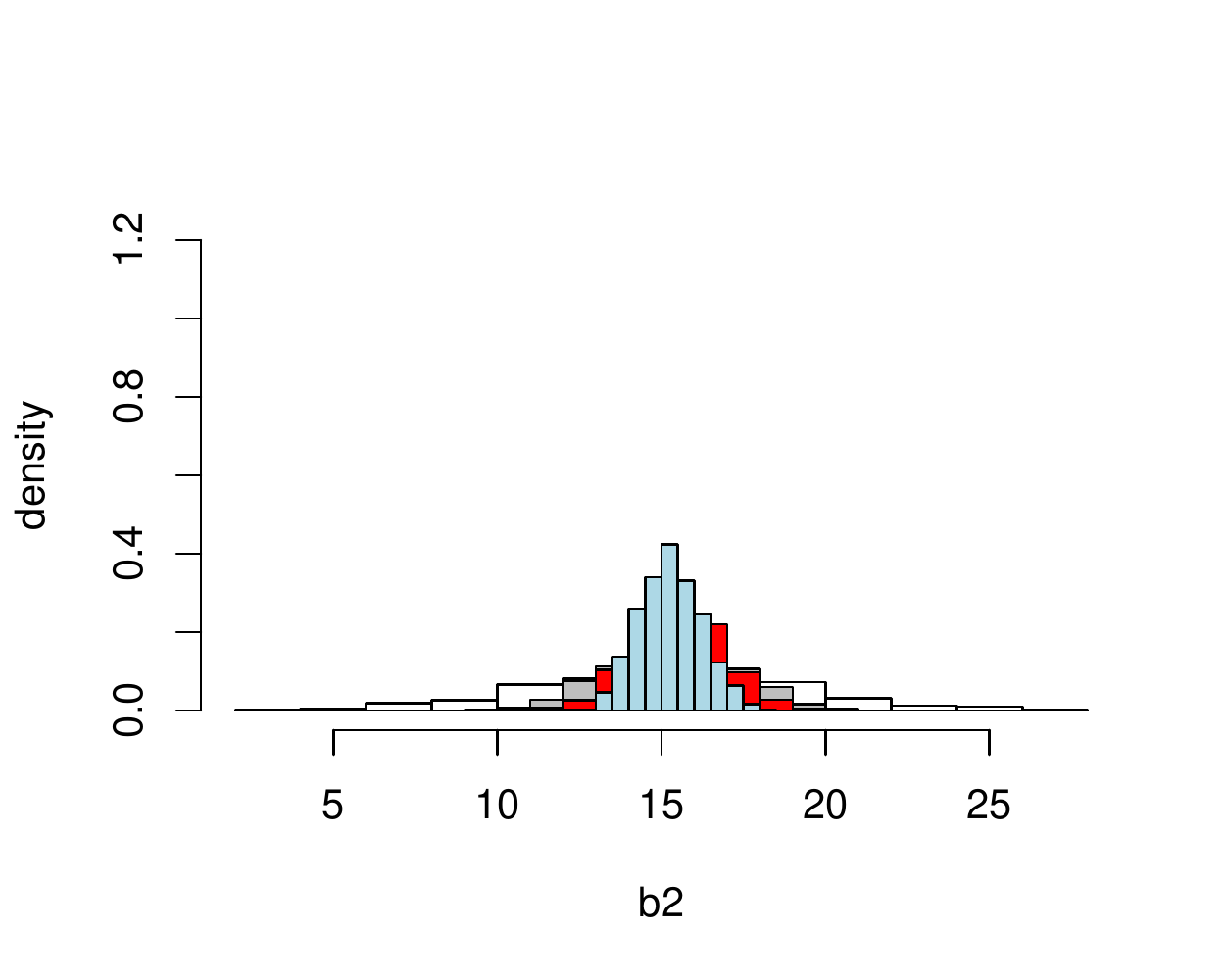}
\includegraphics[scale=0.6,trim=30 0 30 40, clip=TRUE]{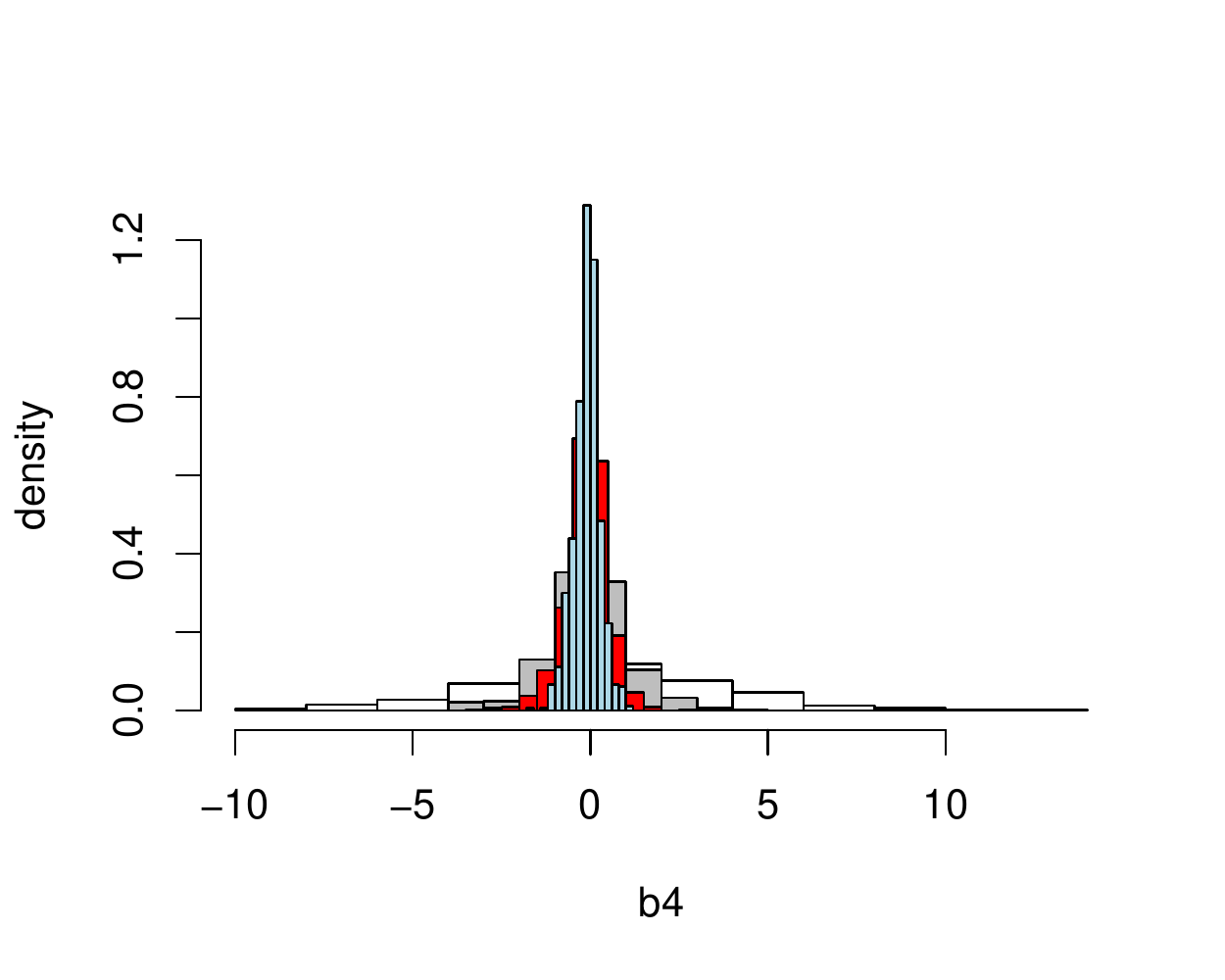}
\vspace{-0.5cm}
\caption{Illustrating the concentration of posterior mass of $\beta_1$
  and $\beta_4$ on the Pima Indian data for $\kappa \in
  \{1,5,10,20\}$}
\label{f:pima:b24}
\end{figure}

Figure \ref{f:pima:b24} illustrates how mass concentrates on the MAP
in two disparate cases for varying values of $\kappa$.  For $\beta_2$
({\em left} panel), which is decidedly non-zero in the power
posterior(s), the convergence to the MAP (apparently around $\beta_2 =
6$) is modest.  In the case of $\beta_4$ ({\em right} panel) the
convergence to the MAP (to zero) is more rapid as $\kappa$ is
increased, allowing for confident variable de-selection in a way
similar to the lasso for linear regression.

Finally, we consider the case where $\nu$ is also inferred by MCMC,
jointly with the other parameters in the model.  We use the IG prior
on $\nu$ with $(r=2, d=0.1)$, a typical default choice for linear
regression \citep[e.g.,][]{gramacy:pantaleo:2010}.
\begin{figure}[ht!]
\centering
\includegraphics[scale=0.75,trim=15 0 15 40]{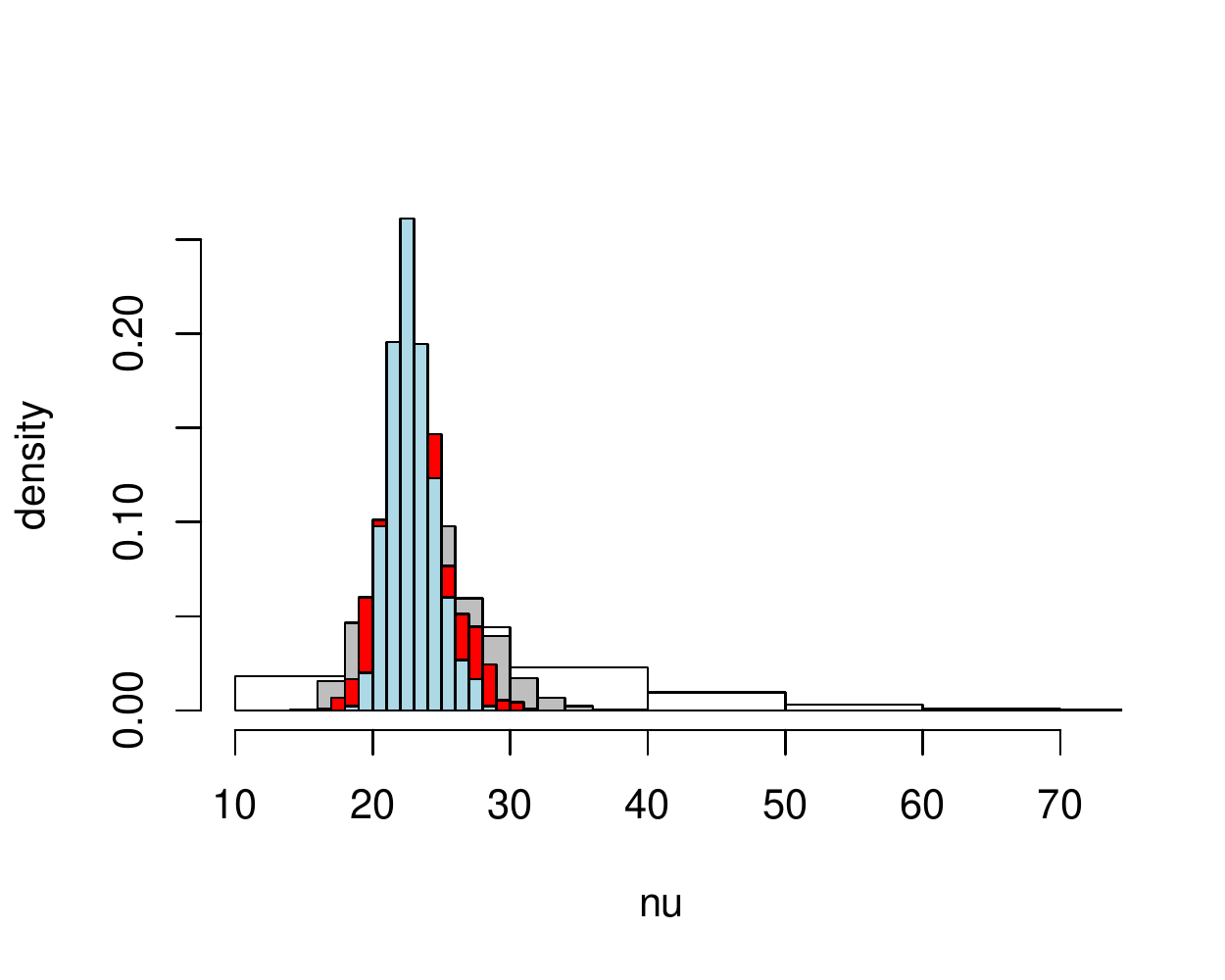}
\vspace{-0.5cm}
\caption{Concentration of posterior mass of
  $\nu$ on the Pima Indian data for $\kappa \in \{1,5,10,20\}$.  The
  histogram extends to $\nu = 100$ when $\kappa=1$, but the figure is
  trimmed.}
\label{f:pima:nu}
\end{figure}
Figure \ref{f:pima:nu} shows the marginal posterior for $\nu$ under
our settings of $\kappa$. % The histogram for $\kappa = 1$
% compared to those for $\kappa > 1$ shows that the marginal posterior
% mode is the same as the joint posterior mode (when considering the
% other parameters in the model).  
The rate of convergence is modest, with the spread of samples in the
$\kappa = 20$ case being only half that of the $\kappa = 1$ case.

\subsection{Comparing c/pdf representations on binomial data}
\label{sec:st}

To illustrate the efficient handling of binomial data and,
simultaneously, to compare the cdf and pdf representations, consider
the following simple binomial logistic regression problem.  The {\em
  true} linear predictor is $\eta_i = 1 + x_i^\top \beta$ where $\beta
= (2,-3,2,-4,0,0,0,0,0)^\top$, and the $p=9$ dimensional $x_i$ are
uniform in $[0,1]^p$.  The responses, $y_i\in \{0,\dots,n_i\}$, are
sampled with $y_i|x_i \sim \mathrm{Bin}(n_i, \mu_i)$ where $n_i = 20$
and $\mu_i = e^{\eta_i}/(1+e^{\eta_i})$.

\begin{table}[ht!]
\vspace{0.5cm}
\centering
\begin{tabular}{r||rr|rr}
& \multicolumn{4}{c}{RMSE (sd)} \\
& \multicolumn{2}{c}{flat} & \multicolumn{2}{|c}{multi} \\
\hline \hline
cdf    & 0.2117 & (0.0602) & 0.2120 & (0.0606) \\   
pdf    & 0.2119 & (0.0613) & 0.2121 & (0.0602)
\end{tabular}
\hfill
\begin{tabular}{r||rr|rr}
& \multicolumn{4}{|c}{time (sd)} \\
& \multicolumn{2}{c}{flat} & \multicolumn{2}{|c}{multi} \\
\hline \hline
cdf    & 570.4 & (37.8) & 64.6 & (0.82) \\   
pdf    & 570.2 & (28.7) & 64.4 & (0.99)
\end{tabular}
\caption{Comparing RMSEs ({\em left}) and timings in seconds 
  ({\em right}) of c/pdf representations and 
  flattened/multiplicity treatments 
  of binomial regression modeling. }
\label{t:binom}
\end{table}

Table \ref{t:binom} compares four different implementations of
regularized binomial logistic regression ($\alpha = 1$) based on the
output of 100 repeated experiments with $\sum n_i = 2000$ (i.e.,
$m=100$ distinct $x_i$ predictors).  The metrics for comparison are
root mean squared error (RMSE) between the true and posterior mean
$\beta$s, and overall computing time of the respective MCMC samplers.
In all cases, we use $T=1000$ MCMC rounds with MH sampling of
$\lambda_i$ at thinning level(s) set by $\kappa'$ (i.e., via
$\kappa'_i$ for each $\lambda_i$) as described in Section
\ref{sec:conds}.  The first 100 rounds were discarded as burn-in.  The
{\em left} table shows that there is no significant difference between
the cdf and pdf representations, or between the flattened or
multiplicity handling of binomial data, in terms of
RMSE.  % That is, given the same number of MCMC iterations, there is no
% cost or benefit to any combination of implementations in terms of
% estimation accuracy.  Moreover, the extra/fewer latent variables in
% the representations do not seem to effect the MC error of the
% resulting estimators.
%
The {\em right} table portrays a more interesting story in terms of CPU
times.  The many fewer latent variables needed by the multiplicity
implementation leads to a much (9x) faster execution compared to
flattening, with no cost in accuracy (via RMSE). In contrast, there is
no speed gain to using $n$ fewer latent $z_i$ variables in the pdf
representation.

\begin{figure}[ht!]
\centering
\includegraphics[scale=0.46,trim=0 10 0 50]{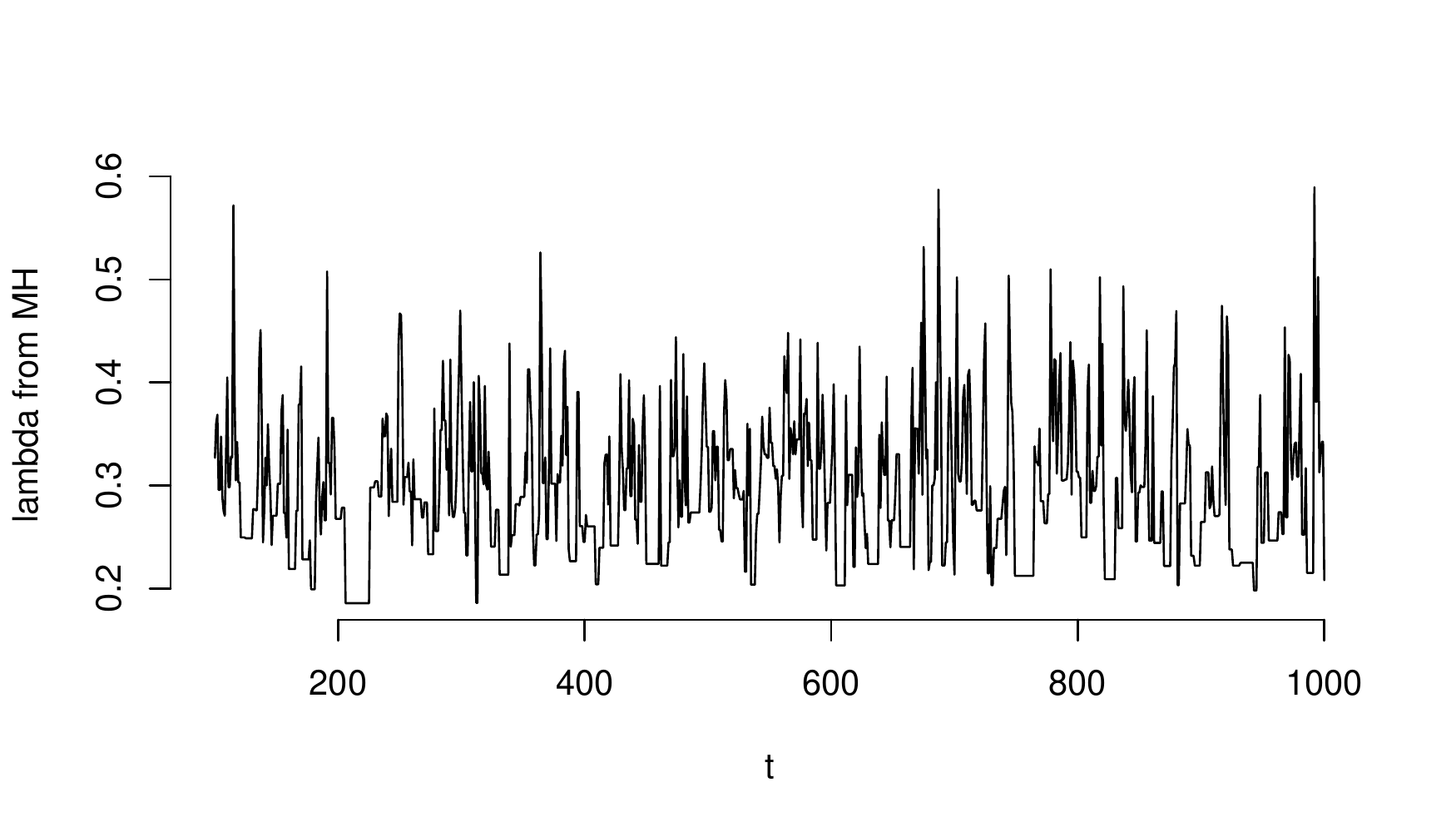}
\includegraphics[scale=0.46,trim=0 10 0 50]{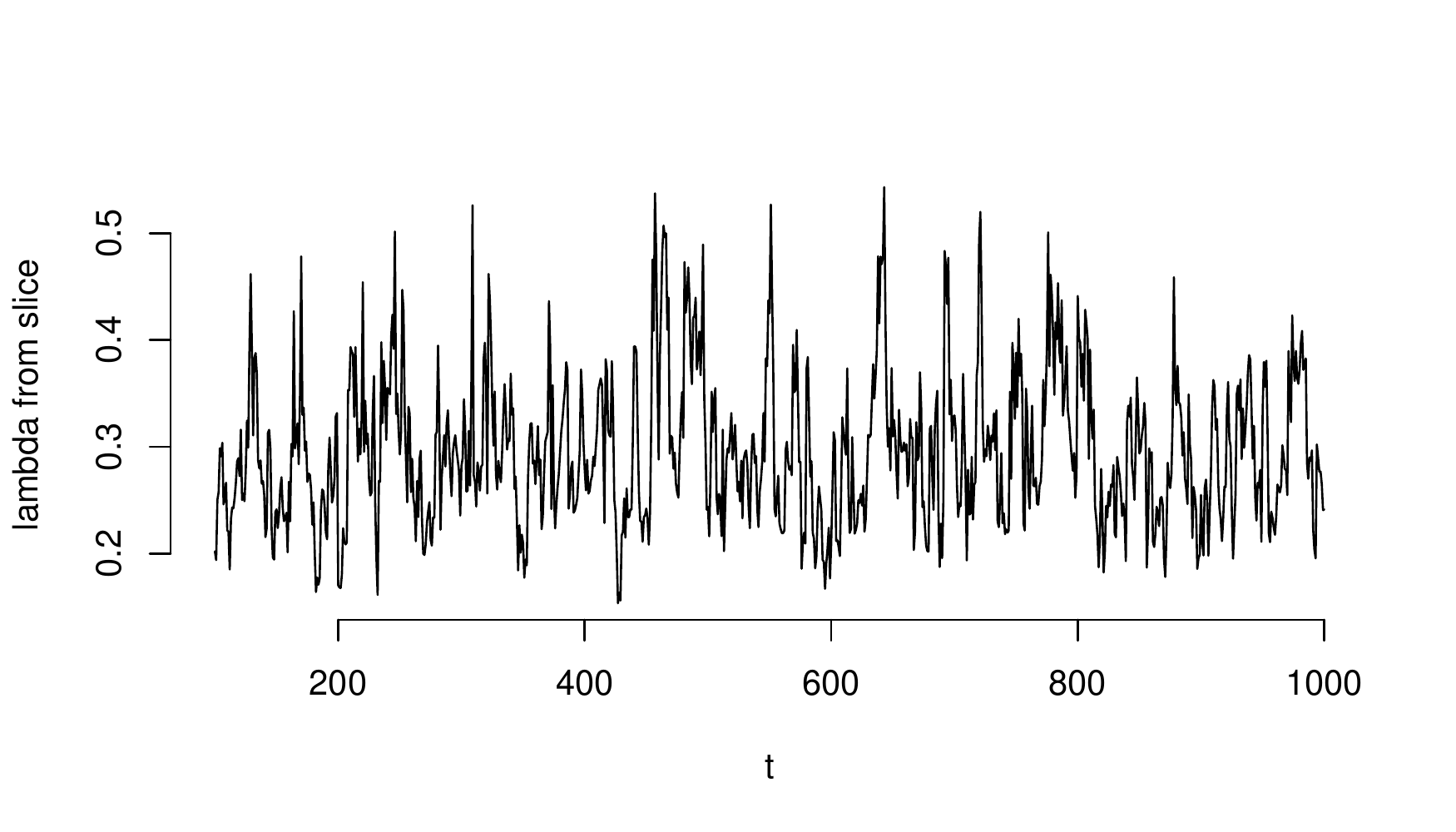}
\caption{Comparing MH ({\em left}) and slice ({\em right}) samplers
  for a $\lambda_i$ in the pdf representation.}
\label{f:slicevmh}
\end{figure}

Figure \ref{f:slicevmh} illuminates the differences in behavior
between the MH and slice sampler for the $\lambda_i$ draws (in the pdf
representation).  A particularly ``sticky'' case, as chosen from
output of the experiment, had $\kappa'_i = 14$.  The {\em top} panel
shows that many proposals from $q_{a,b}$ can be rejected under the MH
ratio, even when the chain is automatically thinned. % (in this case,
% taking $\kappa_i = 14$ draws before continuing). 
The {\em bottom} panel shows the chain obtained for the same
$\lambda_i$ under the slice sampler, which never saves any rejected
draws.  However, this comes at the expense of many rejections in the
inner--loop of the slice, resulting in a slow overall sampler.  The
median was four, but the mean was 81 owing to a heavy right-hand tail
in the distribution of rejections whose central 95\% quantile spanned
to 114 and maximum reached 140,600.  The overall MCMC scheme based on
the slice sampler took four times longer than the one based on MH.
Despite the absence of rejections, the mixing in slice sampler chain
(assessed visually) was no better than MH.  Indeed, their effective
sample size due to autocorrelation \citep{kass:1998} was nearly
identical: 223 for slice sampling, and 221 for MH.  Therefore, MH is
recommended for speed considerations.

\subsection{A simulated $p \gg n$ experiment}
\label{sec:pggn}

We turn now to a predictive comparison of the methods of this paper,
both fully Bayesian and full/joint MAP (including $\nu$), benchmarked
against other modern approaches to regularized logistic regression.
Consider a synthetic data experiment like the one in Section
\ref{sec:st} except: $n_i = 5$ for each of 20 unique predictors $x_i$,
so that $\sum n_i = 100$.  Three variations on the data-generating
$\beta$ vectors were used.  In the first case $p=9$ and $\beta =
(2,-3,0.74,-0.9,0,0,0,0)^\top$; in the second case $p=100$, augmenting
$\beta$ from the first case with 91 more zeros; and in the third
$p=1000$ with 900 more zeros still.  Each experiment involves a new
random training design in the unit $p$-cube.  Random testing set are
created similarly, except that $n_i' = 100$ so $\sum n_i' = 10000$.
The metrics of comparison are (approximated) expected log likelihood
(ELL)\footnote{Specifically, the average of $(1-p_i)\log (1-\hat{p}_i)
  + p_i\log \hat{p}_i$ over all testing locations $i$, where $p_i$ and
  $\hat{p}_i$ are the true and estimated predictive probabilities of
  the first label, respectively.}  and misclassification rates.

Fully Bayesian posterior mean estimators (i.e., $\kappa = 1$) are
derived via priors/MCMC exactly as described in the preceding sections
with $(100,1000)$, $(500,1500)$, $(1000,2000)$ burn-in and total MCMC
rounds in each of the cases $p=9,100,1000$, respectively.  MAP
estimators are found by running a $\kappa = 10$ chain initialized at
$(\beta, \lambda, \nu)$-values from the $\kappa = 1$ chain used for
the mean estimators, except in the $p=1000$ case where $\nu$ was fixed
to its posterior mean for reasons laid out in Section
\ref{sec:anneal}.  Comparators include: the MLE obtained via the {\tt
  glm} command in {\sf R}; a binomial fit from the {\tt glmnet}
package \citep{fried:hast:tibsh:2009}; and the estimator of
\cite{krish:etal:2005}\footnote{This is equivalent to the
  \citet{genkin:lewis:madigan:2007} estimator but computationally less
  efficient.} [``krish'' for short].  The MLE was unstable in the
$p=100$ \& $1000$ cases, so these results were omitted.  CV was used
to choose the penalty parameter in the $p=9$ \& $100$ cases for {\tt
  glmnet}, via {\tt cv.glmnet}.  The same procedure gave fatal errors
in the $p=1000$ case so we plugged in the estimate obtained from the
corresponding $p=100$ run in for this final case.  Reliably setting
the penalty parameter for ``krish'', via CV or otherwise, was too
computationally intensive for the $p=100,1000$ cases so we picked a
setting by hand using out-of-sample simulations from the $p=9$ case.

\begin{figure}[ht!]
\centering
\begin{minipage}{11cm}
\includegraphics[scale=0.75,trim=20 0 0 0]{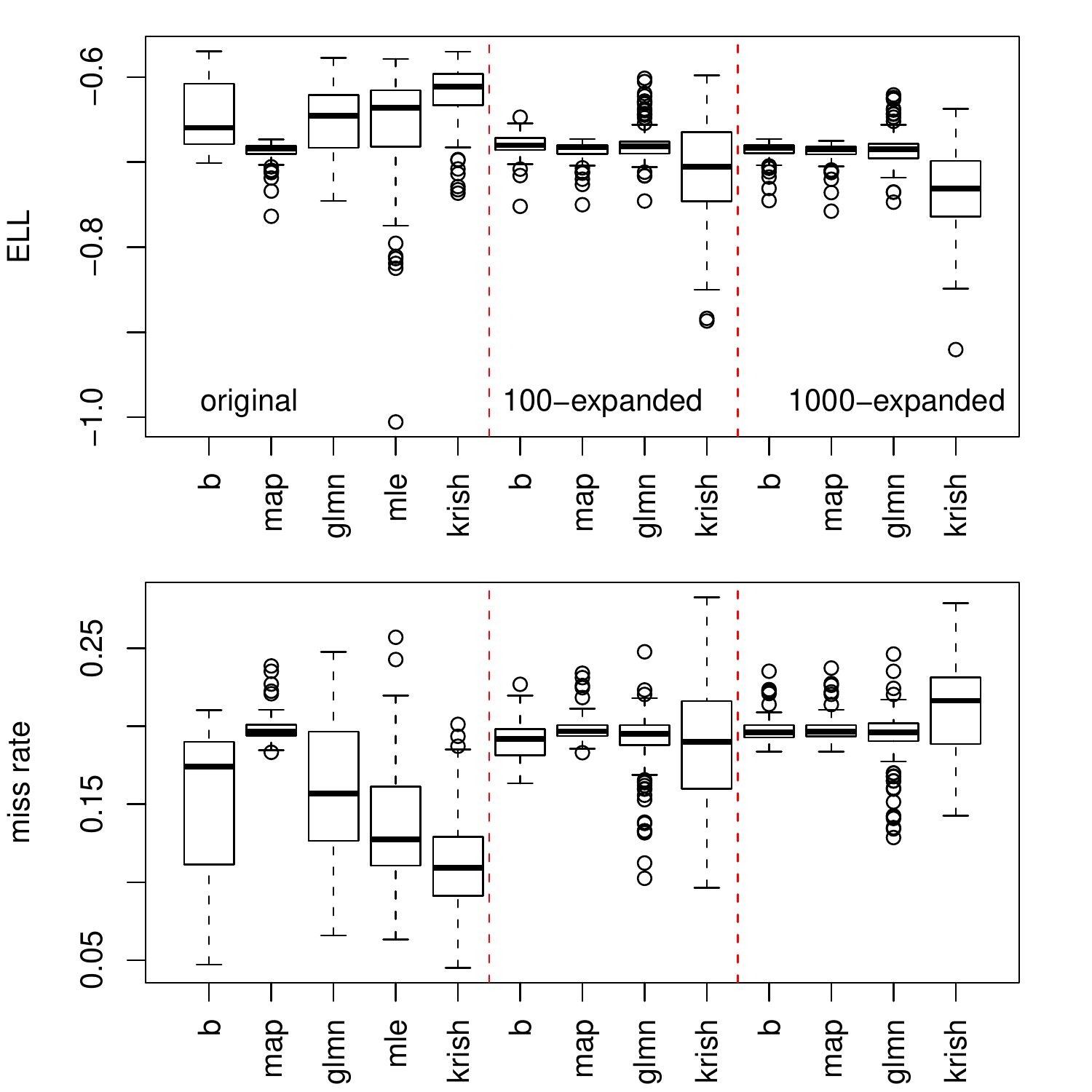}
\end{minipage}
\begin{minipage}{5cm}
\footnotesize
\begin{tabular}{l||rrr}
ELL & 5\% & avg & 95\%\\
\hline \hline
b       & -0.694  &  -0.646  & -0.582 \\
map   & -0.710 &   -0.688 & -0.677 \\
glmn  & -0.703 &  -0.650  & -0.593 \\
mle    & -0.797 & -0.658   & -0.588 \\
krish  & -0.699 & -0.619   & -0.579 \\
\hline 
b       & -0.701 & -0.680   & -0.660 \\
map  & -0.712 &  -0.687  & -0.678 \\
glmn & -0.705 &  -0.678  & -0.629 \\
krish & -0.815 & -0.711   & -0.628 \\
\hline
b      & -0.707 & -0.687    & -0.676 \\
map  & -0.711 &  -0.689  & -0.677 \\
glmn & -0.707 & -0.683   & -0.639 \\
krish & -0.707 & -0.734   & -0.651
\end{tabular}

\vspace{0.5cm}

\begin{tabular}{l||rrr}
miss   & 5\% & avg & 95\% \\
\hline \hline
b        & 0.065 & 0.152   & 0.201 \\
map   & 0.189 &  0.199  & 0.212 \\ 
glmn  & 0.092 & 0.159   & 0.217 \\
mle    & 0.074 & 0.136   & 0.213 \\
krish   & 0.061 & 0.113  & 0.184 \\
\hline
b        & 0.172 & 0.191   & 0.210 \\
map   & 0.188 &  0.199  & 0.212 \\ 
glmn  & 0.133 & 0.189   & 0.212 \\
krish  & 0.129 & 0.189   & 0.244 \\
\hline
b       & 0.187 & 0.198    & 0.215 \\
map  & 0.188 & 0.198    & 0.215 \\
glmn & 0.142 & 0.193    & 0.214 \\
krish & 0.151 & 0.211    & 0.252
\end{tabular}
\end{minipage}
\caption{Expected log likelihood (ELL) and misclassification rates in
boxplot ({\em left}) and tabular ({\em right}) form.  In both cases
there are three sections, depending on the number of irrelevant
predictors in the design matrix, wherein the same estimators are
applied.  The vertical  dashed-red lines in the boxplots indicate
the same demarkation as the horizontal lines in the tables.}
\label{f:stellmiss}
\end{figure}

The results of the Monte Carlo experiment are summarized in Figure
\ref{f:stellmiss} by boxplots, and numerically.  The best estimators
have high ELL, low miss rates, and lower variability across the 100
repetitions. The fully Bayesian and ``krish'' methods are the best
when $p=9$ ({\em left}-hand region of the boxplots and the {\em top}
region of the tables).  The former wins by ELL, having fewer low
values, and the latter wins on miss rate, having more small ones.  The
``krish'' method wins by both metrics on average, since it employs a
fortuitously hand-chosen setting of the penalty parameter.  The MLE is
good on average, but has some extreme ELL and miss rate values.  The
{\tt glmnet} and MAP estimators are positioned in
between. %, the former having
% better values of both metrics but with higher variability.

Distinctions in performance between the methods increases with $p$.
See the {\em right}-hand regions of the boxplots and the {\em bottom}
regions of the tables.  The ``krish'' method suffers from high
variability due to the fixed choice of the penalty parameter.  The
{\tt glmnet} variability is much lower, but there are many extreme
outliers.  Behavior in both $p=100$ and $1000$ cases is qualitatively
similar for this estimator even though the former used CV to set the
penalty parameter and the latter used the same fixed value.  The MAP
and fully Bayesian estimators have similar average behavior compared
to other estimators, but with lower variability.  Apparently, choosing
the penalty parameter via the posterior offers the most stability in
high dimensional settings.  The fully Bayesian approach appears
preferable to the MAP in all cases, but this distinction is harder to
make out as $p$ increases.

\subsection{Spam data with interactions}
\label{sec:spam}

For a similar real-data experiment, consider the Spambase data set
from UCI.  It contains the binary classifications of 4601 emails based
on 57 attributes which are treated as predictors.  An
interaction-expanded version of the predictor set contains
approximately 1700 predictors.  We performed a Monte Carlo experiment
comprising of 20 random 5-fold CV training and testing sets using both
the original and expanded predictors.  Estimators were fit on the 100
training sets, and validated by misclassification rate on the testing
ones.  The Bayes estimators used (500,1500) MCMC (burn-in, total)
rounds with the original 57 attributes, and (1000,2000) with the
expanded set.  The MAP and {\tt glmnet} calculations were exactly as
described for the $p=100$ case in Section \ref{sec:pggn} for the
original predictor set, and like the $p=1000$ case for the expanded
one.  And ``krish'' was like $p=9$ and $p=100$, respectively.  
% That is, the original predictor set got a hand-tuned setting of the
% penalty parameter, and the expanded set used the same value.

\begin{figure}[ht!]
\centering
\begin{minipage}{9cm}
\includegraphics[scale=0.6,trim=20 0 0 0]{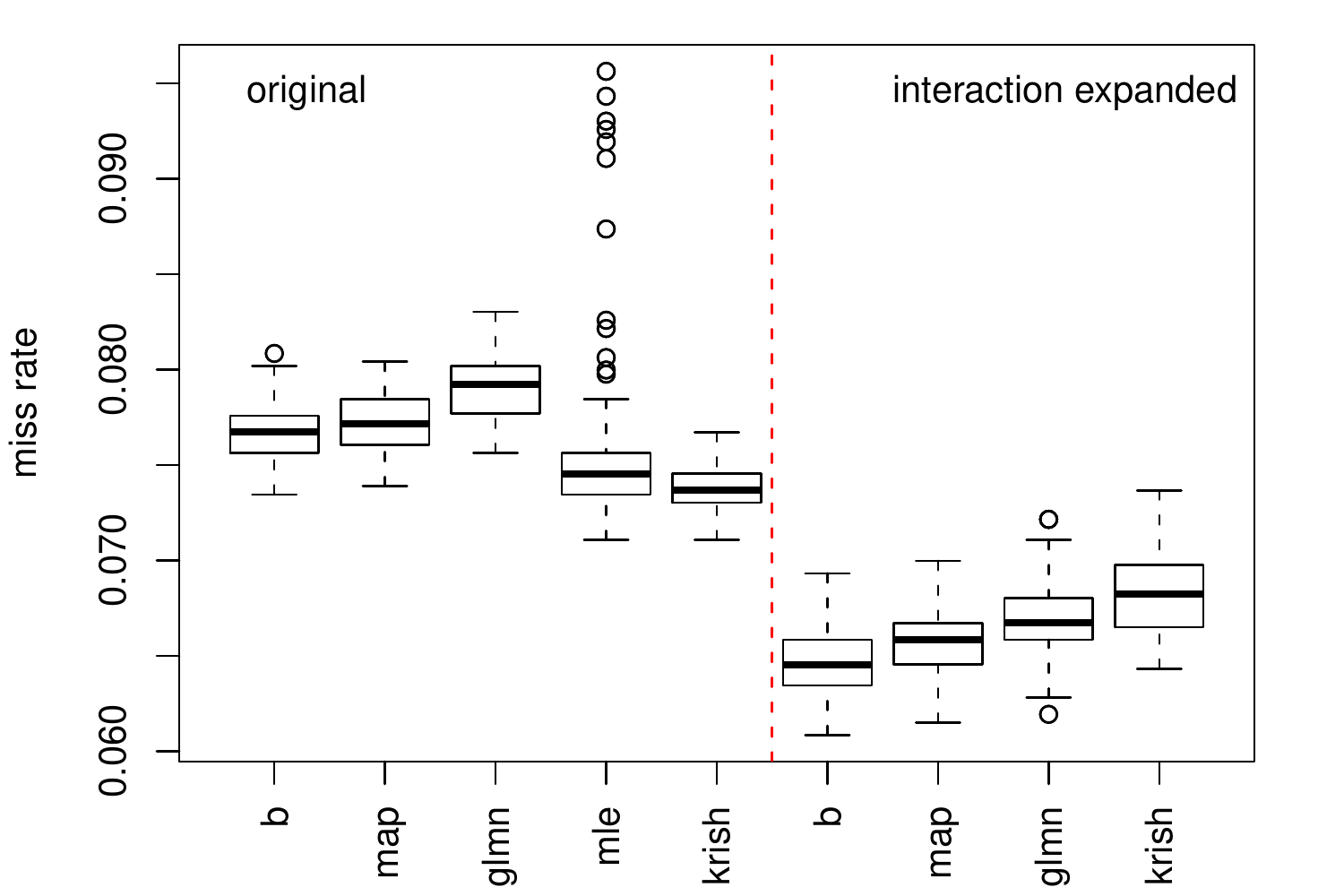}
\end{minipage}
\begin{minipage}{5cm}
\footnotesize
\begin{tabular}{l||rrr}
miss   & 5\% & avg & 95\% \\
\hline \hline
b       & 0.074 &  0.077 & 0.079 \\
map  & 0.074 &  0.077 & 0.080 \\
glmn & 0.077 &  0.079 & 0.082 \\
mle   & 0.071 &  0.076 & 0.092 \\
krish & 0.072 &  0.074 & 0.076 \\
\hline
b       & 0.062  & 0.065 & 0.068 \\
map  & 0.063  & 0.066 & 0.068 \\
glmn & 0.064  & 0.067 & 0.070 \\
krish & 0.065  & 0.068 & 0.072 
\end{tabular}
\end{minipage}
\caption{Misclassification rates in boxplot ({\em left}) and tabular
  ({\em right}) form.  In both cases there are two sections, depending
  absence or presence of interaction terms in the design matrix,
  wherein the same estimators are applied.  The vertical dashed-red
  line in the boxplot indicates the same demarkation as the
  horizontal line in the table.}
\label{f:spammiss}
\end{figure}

The results of the experiment are summarized in Figure
\ref{f:spammiss}.  The first thing we notice is that, in contrast with
the results in Section \ref{sec:pggn}, the performance improves as the
predictor set expands since some of the interaction terms make good
predictors.  % Otherwise, the results are
% quite similar.
The MLE is unstable, and so the regularized estimators
offer an improvement even when the number of predictors is small
relative to the number of instances.  The Bayesian methods
unilaterally outperform {\tt glmnet}, and using the posterior to set
the value of the regularization parameter is important in high
dimensional settings.  The ``krish'' estimator with fortuitous
regularization is the best on the original predictor set, but worst on
the expanded one where a revised setting of regularization 
could not be automated efficiently.

\section{Discussion and extension}
\label{sec:discuss}

We provide a simulation-based approach to regularized logistic
regression that facilitates a variety of inferential goals under a
single framework.  Most of the development of the methodology, and all
of the applications, involved the $\alpha = 1$ case.  Everything
extends to the ridge prior ($\alpha = 2$), i.e., an independent normal
prior for each coefficient $\beta_j$ with variance $\sigma_j^2 \nu^2 /
\kappa$.  Then, $p_\kappa(\omega_j |\beta, \nu)$ is a point mass at
$\omega_j = 1$.  Thus similar conjugacy results hold for the gamma
prior on $\nu$ and
$\nu^2$. % The general $\alpha$ case is harder since the density of the
% stable mixing distribution is only available in terms of its
% characteristic function.  Some thoughts for general $\alpha > 0$ are
% included in Appendix \ref{sec:genalpha}, which could be used as the
% basis of an EM algorithm. 

From a computational perspective, our methods are competitive with the
state-of-the art in un-regularized (and $\kappa = 1$) contexts too.
For example, we compared the efficiency of our methods to the ``dRUM''
MH sampler described by \cite{frueh2:2010}.  This method is attractive
because it is fast and easy to implement.  For example, on the Pima
data it takes about 32s to generate 10,000 samples from the posterior
which is about 7x faster than our pdf representation, which took 230s.
However, the MH acceptance rate of the dRUM method was 46\% which lead
to an marginal ESS of 957 averaged over the nine $\beta_j$
coefficients.  Our pdf representation had an average ESS that was
about 5x better, at 4518.  So the methods work out to have similar
overall efficiences in that example.  But in higher dimension like the
57-d spam data, our Gibbs sampling approach is much more attractive.
The acceptance rate for dRUM was extremely low at 0.4\%, which leads
to ESSs that are essentially nil.  Although our pdf representation is
(again) 7x slower, faster convergence due to better movement in the
chain leads to reasonable ESSs around 500.

There are several extensions of our methodology that readily
present themselves.  For example, handling polychotomous data (i.e.,
$>2$ classes) is straightforward.  Following the setup in HH we may
introduce $C$ collections of coefficients $\beta^{(1)}, \dots,
\beta^{(C)}$ for $C$ classes with the convention that $\beta^{(C)} =
0$ so that logistic regression is recovered in the $C=2$ case.  Then,
we simply work with the conditional likelihoods $L(\beta^{(j)} | y,
\beta^{(-j)})$ which turn out to have exactly the form of a logistic
regression likelihood for the class indicator that each $y_i = j$,
independently for $i=1,\dots, n$.  If there are $n_i > 1$ {\em trials}
for predictors $x_i$, then our algorithm for binomial logistic
regression is applicable via a vectorized multiplicity parameter as
described in Section
\ref{sec:binom}.  % Therefore, our framework can be readily extended to
% facilitate efficient inference for both Bayesian and classical
% regularized {\em multinomial} regression with applications, for
% example, to text classification \citep{genkin:lewis:madigan:2007}.
Extending the methods to ordinal responses is even
easier. \citet[][Chapter 4]{johnson:albert:1999} describe a Bayesian
probit model which may be adapted for the logit case following either
HH or our cdf representation.  The pdf representation may not be
readily applicable because the latent $z_i$ are useful for efficient
sampling of the ordinal break
points.  % Exploring this case further is part of
                           % our ongoing work.

An further direction is to other classes of regularization priors.
Implementing the Normal--Gamma extension \citep{griffin:brown:2010}
requires adding an extra (conjugate) parameter.  A promising new
approach is the {\em horseshoe} prior
\citep{carvalho:polson:scott:2010}, which can be implemented with the
addition of a slice sampler.  Often variable selection is a primary
goal of regularization, for which our methods would require further
extension.  For example, HH describe an approach to variable selection
for logistic regression via Reversible Jump MCMC \citep{green:1995}
which is adaptable to our framework.
% This may be easily adapted to our regularized power-posterior
% framework, and then perhaps be coupled with the median model
% approach \citep{barbieri:berger:2004} for variable selection, say,
% or used in a model averaging context \citep{clyde:1999}.
A similar regularized approach in a linear regression is provided
by\cite{gramacy:pantaleo:2010}.  For variable selection for logistic
regression using spike-and-slab priors, see \citet{tuchler:2008}.

% Finally, the uses of
% $z$--distributions when paired with regularization priors, may
% easily be accommodated by minor extensions to our framework.  For some
% examples of previous uses in similar Bayesian contexts see
% \cite{carlin:polson:stoffer:1992} and \cite{fern:steel:1998}.

\subsection*{Acknowledgments}

This research was partially funded by EPSRC grant EP/D065704/1 to RBG.
The authors would like thank Matt Taddy for interesting discussions on
the efficient handling of Binomial data, extensions to Multinomial
regression, and EM code for the MAP estimator(s).  We are grateful to
two referees and an associate editor for valuable comments.

\appendix 

\section{Posterior conditional for $\beta$ in the pdf representation}
\label{sec:beta}

For a particular $\lambda$, i.e., ignoring the integral in Eq.~(\ref{eq:pdf}),
we have the following expression for the likelihood in vector/matrix form.
\[
\prod_{i=1}^n \left(1 + e^{- y_i x_i^\top\beta}\right) = e^{a y^\top X
  \beta}
\exp \left\{ - \frac{1}{2}\left((y.X)\beta +
    \frac{1}{2}(a-b)\lambda\right)^\top \!\Lambda^{-1} 
\left((y.X)\beta + \frac{1}{2}(a-b)\lambda\right) \right\}
\]
An expression for the posterior conditional for $\beta$ can then
obtained by multiplying by the kernel of the MVN prior given $\omega$,
provided below Eq.~(\ref{eq:alpha-cdlike}), namely: $\exp\{-
\frac{1}{2} \beta^\top \!(\frac{\kappa^2}{\nu^2} \Sigma^{-1} \Omega^{-1}
\beta)\}$.  Combining the terms in the three exponents gives the
following quadratic form.
\[
-\frac{1}{2}\left[ - 2 a y^\top X \beta + \!\left( (y.X)\beta +
    \frac{1}{2}(a-b)\lambda \right)^{\!\top} \!\!\!\Lambda^{-1} \!\left( (y.X)\beta +
    \frac{1}{2}(a-b)\lambda \right)\! +
  \beta^\top\!\left(\frac{\kappa^2}{\mu^2} \Sigma^{-1} \Omega^{-1}\!\right)\! \beta
\right]
\]
Collecting terms for $\beta$ yields
\[
\beta^\top\left((y.X)^\top \Lambda^{-1} (y.X) + \frac{\kappa^2}{\nu^2}
\Sigma^{-1} \Omega^{-1}\right) \beta - 
(2 a y^\top X - (a-b)(y.X) \Lambda^{-1} \lambda) \beta.
\]
Therefore we deduce that the conditional is
$\mathcal{N}_p(\tilde{\beta}, V)$ where $V^{-1} = (y.X)^\top
\Lambda^{-1} (y.X) + \frac{\kappa^2}{\nu^2} \Sigma^{-1} \Omega^{-1}$.
Recognizing that $(y.X) \Lambda^{-1} \lambda = X^\top y$ gives that
$\tilde{\beta} = V(a - \frac{1}{2}[a-b]) X^\top y$.

\section{Generalized Inverse Gaussian distribution}
\label{sec:gig}

The pdf of a Generalized Inverse Gaussian,
$\gig(\lambda, \chi, \psi)$ is
\[
g(x; \lambda, \chi, \psi) =
\frac{(\psi/\chi)^{\lambda/2}}{2K_\lambda(\sqrt{\psi \chi})}
    x^{\lambda-1} \exp\left\{-\frac{1}{2}(\psi x + \chi/x) \right\},
\]
where $K_\lambda$ is a modified Bessel function of the second kind.
If $ X \sim \gig\!\left( \frac{1}{2} , \chi , \psi \right) $ then $
X^{-1} \sim \ig(\mu=\sqrt{\psi/\chi}, \lambda=\psi)$ where 
where $\ig$ is the inverse Gaussian distribution with
pdf
\[
f( x; \mu,\lambda)  = \sqrt{ \frac{\lambda}{2\pi x^3} }
\exp\left\{ -\frac{\lambda \left(  x-\mu\right)^{2} }{2 \mu^2 x } 
\right\}.
\]
The mean and variance are $\mathbb{E}\{ x\} =\mu$ and
$\mathrm{\mathbb{V}ar}[ x] =\mu^{3}/ \lambda$.  A generalized inverse
Gaussian $\gig\left(\frac{1}{2} , \chi , \psi \right)$ is an inverse
of an Inverse Gaussian.  For simulation from $\gig$ and $\ig$
distributions see \cite{devroye:1986}.

\bibliography{reglogit}
\bibliographystyle{jasa}

\end{document}